\renewcommand{\d}{\mathrm{d}}
\newtheorem{theorem}{Theorem}
\newtheorem{proposition}[theorem]{Proposition}
\newtheorem{lemma}[theorem]{Lemma}
\numberwithin{equation}{section}
\numberwithin{theorem}{section}
\title[Hamiltonian Palatini's gravity with boundaries]{On a covariant Hamiltonian description of Palatini's gravity on manifolds with boundary}
\author{A. Ibort}
\address{ICMAT and Depto. de Matem\'aticas, Univ. Carlos III de Madrid, Avda. de la
Universidad 30, 28911 Legan\'es, Madrid, Spain.}
\email{albertoi@math.uc3m.es}
\author{A. Spivak}
\address{Dept. of Mathematics, Univ. of California at Berkeley, 903 Evans Hall, 94720 Berkeley CA, USA}
\email{ameliaspivak@gmail.com}
\begin{document}
\begin{abstract}
 A covariant Hamiltonian description of Palatini's gravity on manifolds with boundary is presented.  Palatini's gravity appears as a gauge theory satisfying a constraint in a certain topological limit. This approach allows the consideration of non-trivial topological situations.\\  
 The multisymplectic framework for first-order covariant Hamiltonian field theories on manifolds with boundary, developed in [Ib15], enables analysis of the system at the boundary. The reduced phase space of the system is determined to be a symplectic manifold with a distinguished isotropic submanifold corresponding to the boundary data of the solutions of the Euler-Lagrange equations.  
\end{abstract}

\maketitle
\tableofcontents

%%%%%%%%%%%%%%%%%%%%%%%%%%%%%%%%%%%%%%%%%%%%%%%%%%%%%%%%%%%%%%%%%%%%%%%%%%%%%%%%%%%%%%%%%%%%%%%%%%%%%%%%%%%%%%%%%%%%%%%%%%%%

\section{Introduction}\label{sec:introduction}

 Our understanding of the Hamiltonian structure of Gravity has taken half a century. The initial difficulties faced by Dirac and Bergmann [Be58],[Be81], were slowly resolved through the work of Arnowit, Deser and Misner [Ar62], all the way to Ashtekar's formulation [As87]. At least part of the motivation has been to place the theory of gravity on grounds that will make it suitable for a canonical quantization scheme.

In [Ro06], C. Rovelli illustrated a simple Hamiltonian formulation of General Relativity which is manifestly 4D generally covariant and that drops the reference to the underlying space-time in Palatini's formulation of gravity.
Rovelli's proposal is highly geometrical and constructs its space as the $4+16+24$ dimensional space $\widetilde{\mathcal{C}}$ with local coordinates $(x^\mu, e_\mu^I, A_\mu^{IJ})$.  In a further effort at extracting the geometrical essence of such space, the variables $x^\mu$ are dropped (accounting by the invariance of the theory under global diffeomorphisms) and we are led to a $40$ dimensional space $\mathcal{C}$ [Ro01].  The disappearance of the spacetime manifold $M$ and its coordinates $x^\mu$, which survive only as arbitrary parameters on the `gauge orbits' of the canonical geometrical structure defined on it, generalizes the disappearance of the time coordinate in the ADM formalism and is analogous to the disappearance of the Lagrangian evolution parameter in the Hamiltonian theory of a free particle [Ro01]. It simply means that the general relativistic space- time coordinates are not directly related to observations.

Our program in this paper is similar but our inspiration is the geometrical foundations of covariant first order Hamiltonian field theories on manifolds with boundary discussed recently in [Ib15].   There the role of a covariant phase space for a first order Hamiltonian theory modelled on the affine dual space of the first jet bundle of the bundle defining the fields of the theory is assessed and the crucial role played by boundaries as determining symplectic spaces of fields defining the classical counterpart of the quantum states of the theory is stressed in accordance with the point of view expressed in [Sc51].   

Actually a generally covariant notion of instantaneous state, or evolution of states and observables in time, make little physical sense.   They are always referred to an initial data space-like surface that in the picture presented here, corresponds to the boundary of the space-times of events.    Such notion does not really conflicts with diffeomorphism invariance because a diffeomorphism of a smooth manifold with smooth boundary restricts to a diffeomorphism of the boundary.   Thus, providing that the notion of boundary of a spacetime is incorporated in the basic description of the theory, we may still consider diffeomorphism invariance as a fundamental notion without contradicting it.    

The covariant phase space of the theory carries a natural multisymplectic structure which is the exterior differential of a canonical $m$-form $\Theta$ defined on it.   This geometrical structure has been considered in various guises in the various variational formulations of field theories, however its first use in the present setting is to help to identify the nature of the different fields of the theory.   Thus it will be discussed how the vierbein fields  $e_\mu^I$ correspond to an algebraic constraint imposed in the momenta fields of the theory.    The corresponding action will be seen to be invariant under the group of all automorphisms of the geometrical structure and it will induce the corresponding reduction on the space of gauge fields at the boundary.   This reduction process is interpreted as the appropriate setting for the `elimination' of the space-time $M$, i.e., the space of physical classical solutions  of the theory in the bulk is the moduli space of the space of solutions of the Euler-Lagrange equations with respect to the 
group of automorphisms whereas, the phase space of physical degrees of freedom of the theory, associated to its boundary, is the reduced symplectic manifold of fields at the boundary.    

We can give $\mathcal{C}$ a direct physical interpretation in terms of reference systems transformations. In the quantum domain, it leads directly to the spin-network to spin-network amplitudes computed in loop quantum gravity.

%%%%%%%%%%%%%%%%%%%%%%%%%
%%%%%%%%%%%%%%%%%%%%%%%%%
%%%%%%%%%%%%%%%%%%%%%%%%%

\section{The geometry of the covariant phase space for Yang-Mills theories}

As discussed in the introduction our approach to Palatini's gravity will be to consider it as a constrained first order covariant Hamiltonian field theory on a manifold with boundary obtained as a topological phase of a gauge theory.     We will review first the geometrical setting for covariant first order Hamiltonian Yang-Mills theories and the topological phase that will interest us.   

\subsection{A brief account of the multisymplectic formalism for first order covariant Hamiltonian Yang-Mills theories on manifolds with boundary}\label{sec:general}

%%%%%%%%%%%%%%%%%%%%%%%%%%
%%%%%%%%%%%%%%%%%%%%%%%%%%

We will review first the basic notions and notations for first order covariant Hamiltonian field theories (see more details in [Ib15]).

\subsubsection{The covariant phase space of Yang-Mills theories}

The fundamental geometrical structure of a given first order Hamiltonian theory will be provided by a fiber bundle $\pi \colon E \to M$ with $M$ an $m = (1+d)$-dimensional orientable smooth manifold with smooth boundary $\partial M \neq \emptyset$ and local coordinates adapted to the fibration $(x^\mu, u^a)$, $a= 1, \ldots, r$, where $r$ is the dimension of the standard fiber.  Because $M$ is orientable we will assume that a given volume form $\mathrm{vol}_M$ is selected.  Notice that it is always possible to chose local coordinates $x^\mu$ such that $\mathrm{vol}_M = dx^0 \wedge dx^1 \wedge \cdots \wedge dx^d$.

Yang-Mills fields are principal connections $A$ on some principal fiber bundle $P \to M$ with structural group $G$.   For clarity in the exposition we are going to make the assumption that $P$ is trivial (which is always true locally), i.e., $P \cong M \times G  \to M$ where (again, for simplicity) $G$ is a Lie group with Lie algebra $\mathfrak{g}$.  Under these assumptions, principal connections on $P$ can be identified with $\mathfrak{g}$-valued 1-forms on $M$, i.e., with sections of the bundle $E=T^*M \otimes \mathfrak{g} \longrightarrow M$.  Local bundle coordinates in the bundle $E \to M$ will be written as $(x^\mu, A_\mu^a)$, $\mu = 1, \ldots, m$, $a= 1, \ldots, \dim\mathfrak{g}$, where 
$A = A_\mu^a \, \xi_a \in \mathfrak{g}$ with ${\xi}_a$ a basis of the Lie algebra $\mathfrak{g}$.  Thus, a section of the bundle can be written as 
\begin{equation}\label{connectionA}
A(x) = A^a_{\mu}(x) {dx}^{\mu}{\otimes}{\xi}_a \, .
\end{equation}

We will denote by $\pi_1^0\colon J^1E \to E$ the affine 1-jet bundle of the bundle $E\stackrel{\pi}{\rightarrow}M$.
The elements of $J^1E$ are equivalence classes of germs of sections $\phi$ of $\pi$, i.e., two sections $\phi$, $\phi'$ at $x\in M$ are equivalent, i.e., represent the same germ,  if $\phi(x) = \phi'(x)$ and $d\phi (x) = d\phi'(x)$.   The bundle $J^1E$ is an affine bundle over $E$ modeled on the vector bundle $VE \otimes_E \pi^* (T^*M)$.   If $(x^\mu; u^a)$, $\mu = 0, \ldots, d$ is a bundle chart for the bundle $\pi\colon E\to M$, then we will denote by $(x^\mu,u^a; u^a_\mu)$ a local chart for the jet bundle $J^1E$.
So in the case of Yang-Mills, local coordinates  on $J^1E$ will be denoted by $(x,A^a,A^a_{\mu})$.

The affine dual of $J^1E$ is the vector bundle over $E$ whose fiber at $\xi = (x,u)$ is the linear space of affine maps $\mathrm{Aff}(J^1E_\xi, \mathbb{R})$.   The vector bundle $\mathrm{Aff}(J^1E, \mathbb{R})$, possesses a natural subbundle defined by constant functions along the fibers of $J^1E \to E$, that we will denote again, with an abuse of notation, as $\mathbb{R}$, then the quotient bundle $\mathrm{Aff}(J^1E, \mathbb{R})/\mathbb{R}$ will be called the covariant phase space bundle of the theory, or the phase space for short.   Notice that such bundle, denoted in what follows by $P(E)$ is the vector bundle with fibre at $\xi = (x,u) \in E$ given by $(V_uE\otimes T_x^*M)^* \cong T_xM \otimes(V_uE)^*\cong \mathrm{Lin}(V_uE,T_xM)$ and projection $\tau_1^0 \colon P(E) \to E$.

Local coordinates on $P(E)$ can be introduced as follows:
Affine maps on the fibers of $J^1E$ have the form $u_{\mu}^a \mapsto \rho_0 + \rho_a^{\mu}u_{\mu}^a$ where $u_{\mu}^a$ are natural coordinates on the fiber over the point $\xi$ in $E$ with coordinates $(x^{\mu},u^a)$. Thus an affine map on each fiber over $E$ has coordinates $\rho_0, \rho^{\mu}_a$, with $\rho^\mu_a$ denoting linear coordinates on $TM \otimes VE^*$ associated to bundle coordinates $(x^\mu, u^a)$.   Functions constant along the fibers are described by the numbers $p_0$, hence elements in the fiber of $P(E)$ have coordinates $\rho_a^{\mu}$.  Thus a bundle chart for the bundle $\tau_1^0\colon P(E) \to E$ is given by $(x^\mu, u^a;  \rho^\mu_a)$.

The choice of a distinguished volume form $\mathrm{vol}_M$ in $M$ allows us to identify the fibers of $P(E)$ with a subspace of $m$-forms on $E$ as follows ([Ca91]):
The map $u_{\mu}^a \rightarrow \rho_a^{\mu}u_{\mu}^a$ corresponds to the $m$-form $\, \, \rho_a^\mu d u^a\wedge \mathrm{vol}_{\mu}$
where vol$_{\mu}$ stands for $i_{{\partial}/{\partial x^{\mu}}}$vol$_M.$
Let ${\bigwedge}^m (E)$ denote the bundle of $m$-forms on $E$. Let 
${\bigwedge}_k^m(E)$ be the subbundle of ${\bigwedge}^m (E)$ consisting of those $m$-forms which vanish when $k$ of their arguments are vertical.  So in our local coordinates, elements of ${\bigwedge}_1^m(E)$, i.e., $m$-form on $E$ that vanish when one of their arguments is vertical, commonly called semi-basic 1-forms, have the form $\rho_a^{\mu} d u^a\wedge \mathrm{vol}_\mu + \rho_0 \mathrm{vol}_M$, and elements of ${\bigwedge}_0^m(E)$, i.e., basic $m$-forms, have the form $p_0\mathrm{vol}_M$. These bundles form a short exact sequence:
$$
0\rightarrow\textstyle{\bigwedge}^m_0E\hookrightarrow
	\textstyle{\bigwedge}^m_1E\rightarrow P(E)\rightarrow0 \, .
$$
Hence ${\bigwedge}_1^m E$ is a real line bundle over $P(E)$ and, for each point $\zeta = (x,u,p)\in P(E)$, the fiber is the quotient $\bigwedge_1^m (E) _\zeta/ \bigwedge_0^m (E)_\zeta$.

In the case of Yang-Mills, elements of $P(E)$ have the form $P = P_a^{\mu \nu}dA_{\mu}^a \wedge d^{m-1}x_{\nu}$.
		
  The bundle $\textstyle{\bigwedge}_1^m(E)$ carries a canonical $m$--form  which may be defined by a generalization of the definition of the canonical 1-form on the cotangent bundle of a manifold.  Let $\sigma \colon \textstyle{\bigwedge}_1^m(E) \to E$  be the canonical projection, then the canonical $m$-form $\Theta$ is defined by 
$$
\Theta_\varpi(U_1,U_2,\ldots,U_m) = \varpi(\sigma_*U_1, \ldots, \sigma_*U_m)
$$
where $\varpi\in\bigwedge^m_1(E)$ and $U_i\in T_\varpi(\bigwedge^m_1(E))$.
As described above, given bundle coordinates $(x^\mu,u^a)$ for $E$
we have coordinates $(x^\mu,u^a,\rho, \rho^\mu_a)$
on $\bigwedge^m_1(E)$ adapted to them  
and the point $\varpi\in\bigwedge^m_1(E)$ with coordinates
$(x^\mu,u^a;\rho, \rho^\mu_a )$ is the $m$-covector
$\varpi =  \rho^\mu_a\, d u^a\wedge \mathrm{vol}_\mu + \rho \, \mathrm{vol}_M$.
With respect to these same coordinates we have the local expression
$$
\Theta =  \rho^\mu_a \, d u^a \wedge \mathrm{vol}_\mu  + \rho\, \mathrm{vol}_M \, ,
$$
for $\Theta$, where $\rho$ and $\rho^\mu_a$ are now to be interpreted as coordinate
functions.
	
The $(m+1)$-form $\Omega = d \Theta $ defines a multisymplectic structure on the manifold $\bigwedge_1^m(E)$, i.e.$(\bigwedge^m_1(E),\Omega)$ is a multisymplectic manifold. There is some variation in the literature on the definition of multisymplectic manifold. For us, following [Ca91] and [Go98], a multisymplectic manifold is a pair $(X,\Omega)$ where X is a manifold of some dimension m and $\Omega$ is a form on $X$ of some dimension $d \geq 2$, and $\Omega$ is both closed and nondegenerate. By nondegenerate we mean that if $i_v{\Omega} = 0$ then $v=0$.
	
We will refer to 
$\bigwedge^m_1E$ by $M(E)$ to emphasize that its status as a multisymplectic manifold. We will denote the projection $M(E) \to E$
by $\nu$, while the projection $M(E) \to P(E)$ will be
denoted by $\mu$.  Thus $\nu = \tau^0_1\circ\mu$, with $\tau_1^0 \colon P(E) \to E$
the canonical projection.(See figure 1.)

A Hamiltonian $H$ on $P(E)$ is a section of
$\mu$. Thus in local coordinates 
$$
H(\rho_a^{\mu}\, d u^a\wedge\mathrm{vol}_{\mu}) = \rho_a^{\mu} d u^a\wedge \mathrm{vol}_{\mu}-\mathbf{H}(x^{\mu},u^a, \rho_a^{\mu}) \mathrm{vol}_M \, ,
$$ 
where $\mathbf{H}$ is here a real-valued function.
	
We can use the Hamiltonian section $H$ to define an $m$-form on $P(E)$
by pulling back the canonical $m$-form $\Theta$ from $M(E)$.  We call
the form so obtained the Hamiltonian $m$-form associated with $H$ and denote
it by $\Theta_H$. Thus if we write the section defined in local coordinates $(x^\mu, u^a;\rho, \rho_a^\nu )$ as 
\begin{equation}\label{rhoH}
\rho = - \mathbf{H}(x^{\mu}, u^a, \rho_a^\mu ) \, ,
\end{equation}
then
\begin{equation}\label{ThetaH}
	\Theta_H = \rho_a^\mu\, d u^a \wedge \mathrm{vol}_\mu - \mathbf{H}(x^\mu,u^a, \rho_a^\mu) \, \mathrm{vol}_M \, .
\end{equation}
\,
 In (2.1) the minus sign in front of the Hamiltonian is chosen to be in keeping with the traditional conventions in mechanics for the integrand of the action over the manifold: $pdq - Hdt$. When the form $\Theta_H$ is pulled back to the manifold $M$, as described in section 2.2.1, the integrand of the action over $M$ will have a form reminiscent of that of mechanics, with a minus sign in front of the Hamiltonian. See equation (2.5).
 
\subsubsection{The action and the variational principle}
	From here on, in addition to being an oriented smooth manifold with either a Riemannian or a Lorentzian metric, $M$ has a boundary $\partial M$. The orientation chosen on $\partial M$ is consistent with the orientation on $M$. Everything in the last section applies. The presence of boundaries will enable us to enlarge the use to which the multisymplectic formalism can be applied, starting with eqn. $(2.5)$.	

The fields $\chi$ of the theory in the Hamiltonian formalism constitute a class of sections of the bundle $\tau_1 : P(E) \to M$.    $P(E)$ is a bundle over $E$ with projection $\tau_1^0$ and it is a bundle over $M$ with projection $\tau_1 = \pi\circ \tau_1^0$.  The sections that will be used to describe the classical fields in the Hamiltonian formalism are those sections $\chi\colon M \to P(E)$ ,i.e.  $\tau_1\circ \chi  = \mathrm{id}_M$,  such that $\chi = P \circ \Phi$ where $\phi \colon M \rightarrow E$ is a section of $\pi: E\rightarrow M$, i.e. $\pi \circ \Phi = \mathrm{id}_M$, and  $P \colon E \to P(E)$, is a section of $\tau_1^0 \colon P(E) \rightarrow E$ i.e. $\tau_1^0 \circ P = \mathrm{id}_P$. (See Figure).  The sections $\Phi$ will be called the configurations and the sections $P$ the momenta of the theory.   In other words $u^a = \Phi^a(x)$ and $\rho_a^\mu = P_a^\mu (\Phi(x))$ will provide local expression for the section $\chi = P \circ \Phi$.
 We will denote such a section $\chi$ by $(\Phi, P)$ to indicate the iterated bundle structure  of $P(E)$ and we will refer to $\chi$ as a double section\footnote{It can also be said that $\chi$ is a section of $P(E)$ along $\Phi$.}.
 
 \begin{figure}[ht]
\centering
\includegraphics[width=8cm]{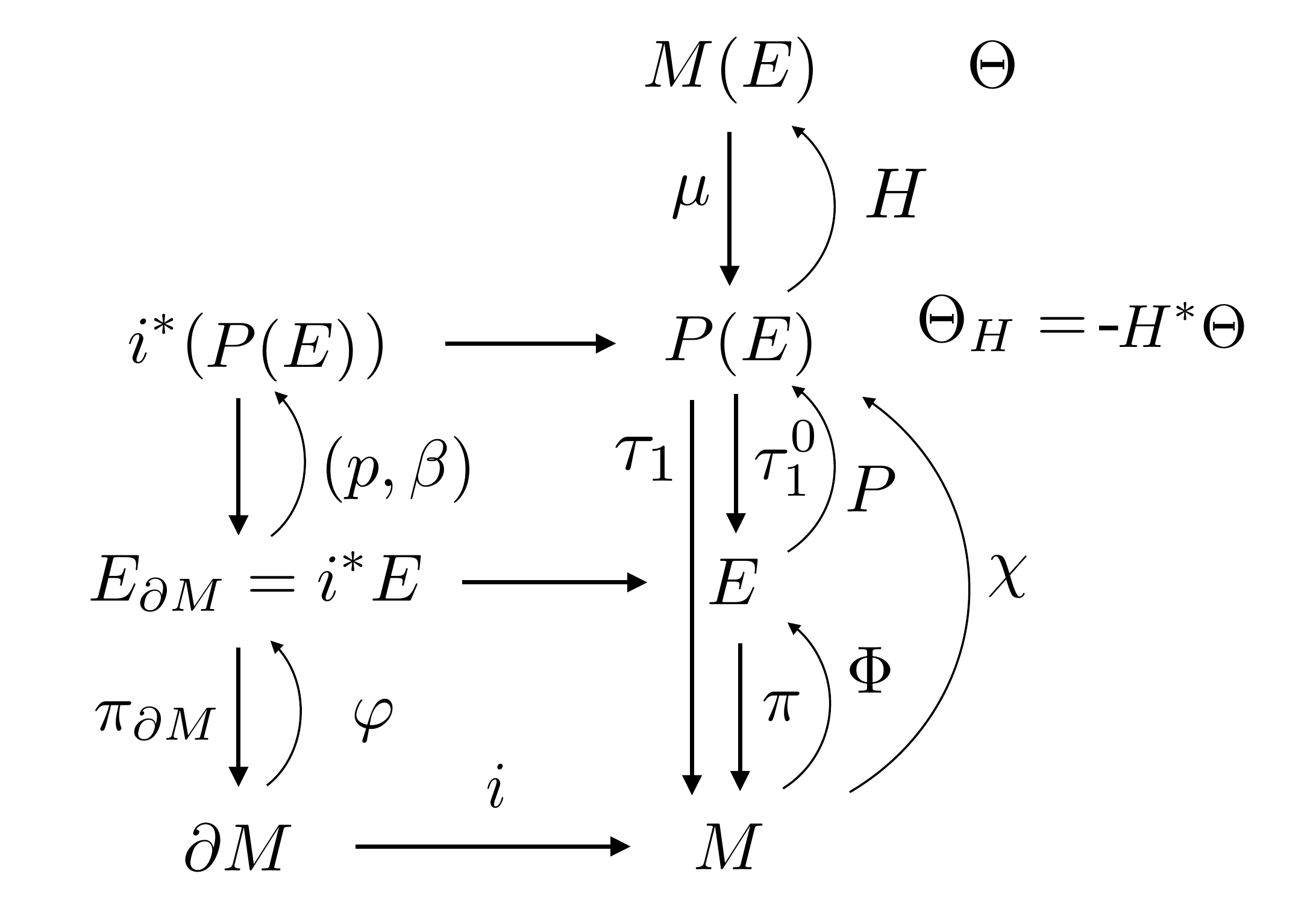}
\caption{Bundles, sections and fields: configurations and momenta}
\end{figure}

We will denote by $\mathcal{F}_M$ the space of sections $\Phi$ of the bundle $\pi \colon E \rightarrow M $, that is $\Phi \in \mathcal{F}_M$, and we will denote by $\mathcal{F}_{P(E)}$ the space of double sections $\chi = (\Phi, P)$.  Thus $\mathcal{F}_{P(E)}$  represents the space of fields of the theory in the first order covariant Hamiltonian formalism.

 Thus the fields of the theory in the multisymplectic picture for Yang-Mills theories are provided by sections $(A,P)$ of the double bundle $P(E) \to E \to M$.

The equations of motion of the theory will be defined by means of a variational principle, i.e., they will be characterized as the critical points of an action functional $S$ on $\mathcal{F}_{P(E)}$. Such action will be given simply by
\begin{equation}\label{action}
 S(\chi ) = \int_M \chi^*\Theta_H  ,
 \end{equation}
 In the case of Yang-Mills theories, the action in a first-order covariant Hamiltonian formulation of the theory is given by,
 \begin{equation}\label{ymap}
S_{\mathrm{YM}}(A,P) = \int_M P_a^{\mu\nu} dA_\mu^a \wedge d x^{m-1}_\nu - H_{\lambda}(A,P) \mathrm{vol}_M \, .
\end{equation}
with Hamiltonian function,
\begin{equation}\label{hamiltonian}
H_{\lambda}(A,P) = \frac{1}{2} \epsilon_{bc}^aP^{\mu\nu}_a A_\mu^bA_\nu^c + \frac{\lambda}{4}P^{\mu\nu}_a P_{\mu\nu}^a \, 
\end{equation}
for some ${\lambda} \geq 0$,
where the indexes $\mu\nu$  ($a$) in $P_a^{\mu\nu}$ have been lowered (raised) with the aid of the Lorentzian metric $\eta$ (the Killing-Cartan form on $\mathfrak{g}$, respect.).

 Of course, as is usual in the derivations of equations of motion via variational principles, we assume that the integral in Eq. $(2.4)$ is well defined.  It is also assumed that the `differential' symbol in equation $(2.7)$ below, defined in terms of directional derivatives, is well defined and that the same is true for any other similar integrals that will appear in this work.

A simple computation leads to,
\begin{equation}\label{dSfirst}
\mathrm{d} S (\chi) (U) = \int_M \chi^* \left(i_{\widetilde U} d\Theta_H \right) + \int_{\partial M} (\chi\circ i)^* \left(i_{\widetilde U} \Theta_H \right) \, , 
\end{equation}
where $U$ is a vector field on $P(E)$ along the section $\chi$, $\widetilde{U}$ is any extension of $U$ to a tubular neighborhood of the image of $\chi$, and $i\colon \partial M \to M$ is the canonical embedding.

%%%%%%%%%%%%%%%%%%
%%%%%%%%%%%%%%%%%%

\subsubsection{The cotangent bundle of fields at the boundary}\label{sec:cotangent_boundary}

The boundary term $\int_{\partial M} (\chi\circ i)^* \left(i_{\tilde U} \Theta_H\right)$ in eq. (2.7) suggests that there is a family of fields at the boundary that play a special role.  Actually, we notice that the field $\tilde{U}$ being vertical with respect to the projection $\tau_1\colon P(E) \to M$ has the form $\tilde{U} = A^a \,  \partial/\partial u^a + B_\mu^a  \, \partial/\partial \rho_\mu^a$. Hence we obtain for the boundary term,
\begin{equation}\label{boundaryfirst}
\int_{\partial M} (\chi\circ i)^* \left(i_{\widetilde U} \Theta_H \right) = \int_{\partial M} (\chi\circ i)^*  \rho_a^\mu \, A^a  \, \mathrm{vol}_\mu  = \int_{\partial M} i^*(P_a^\mu\, A^a \, \mathrm{vol}_\mu)
\end{equation}
for $\chi = (\Phi, P)$. 

We will assume that there exists a collar around the boundary $U_\epsilon \cong (-\epsilon, 0] \times \partial M$, and we choose local coordinates  $x^0 = t \in (-\epsilon, 0]$, and $x^k$, $k = 1, \ldots, d$, describing local coordinates for $\partial M$,  such that $\mathrm{vol}_{U_\epsilon} =  d t \wedge \mathrm{vol}_{\partial M}$. The r.h.s. of eq. (2.8) then becomes,
\begin{equation}\label{boundary_final}
\int_{\partial M} i^*(P_a^\mu\, A^a \, \mathrm{vol}_\mu)  = \int_{\partial M} p_a \, A^a \, \mathrm{vol}_{\partial M} \, ,
\end{equation}
where $p_a = P_a^0\circ i$ is the restriction to $\partial M$ of the zeroth component of the momenta field $P_a^\mu$.

 Consider the space of fields at the boundary obtained by restricting the zeroth component of sections $\chi$ to $\partial M$, that is the fields of the form (see Figure 1)
$$
\varphi^a = \Phi^a \circ i \, , \qquad p_a = P_a^0 \circ i \, .
$$
Notice that the fields $\varphi^a$ are nothing but sections of the bundle $i^*E$, the pull-back along $i$ of the bundle $E$, while the space of fields $p_a$ can be thought of as 1-semibasic $d$-forms on $i^*E \to \partial M$.   This statement is made precise in the following:

\begin{lemma}\label{decomposition}   Given a collar around $\partial M$, $U_\epsilon \cong (-\epsilon, 0] \times \partial M$ and a volume form $\mathrm{vol}_{\partial M}$ on $\partial M$ such that $\mathrm{vol}_{U_{\epsilon}} = d t \wedge \mathrm{vol}_{\partial M}$ with $t$ the normal coordinate in $U_\epsilon$, then the pull-back bundle $i^*(P(E))$ is a bundle over the pull-back bundle $i^*E$ and decomposes 
naturally as $i^*P(E) \cong \bigwedge_1^m(i^*E) \oplus \bigwedge_1^{m-1}(i^*E)$.
\end{lemma}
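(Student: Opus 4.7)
The plan is to use the collar structure to split the tangent bundle of $M$ along $\partial M$, transfer this splitting to the fibers of $P(E)$, and identify each piece with the stated bundles of semi-basic forms on $i^*E$.

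First, because $\tau_1^0\colon P(E)\to E$ is a vector bundle (with fiber $T_xM\otimes (V_uE)^*$ at $(x,u)$) and pull-back commutes with bundle composition, $i^*P(E)$ inherits a vector bundle structure over $i^*E$ whose fiber at $(x,u)$ (with $x\in\partial M$, $u\in E_x$) is canonically identified with $T_xM\otimes(V_uE)^*$; this settles the first assertion of the lemma.

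Next, I would use the collar $U_\epsilon\cong(-\epsilon,0]\times\partial M$ to obtain the canonical splitting $T_xM=T_x\partial M\oplus\mathbb{R}\langle\partial/\partial t|_x\rangle$ for each $x\in\partial M$. Tensoring with $(V_uE)^*$ yields
\[
T_xM\otimes(V_uE)^*\;\cong\;(V_uE)^*\;\oplus\;\bigl(T_x\partial M\otimes(V_uE)^*\bigr),
\]
which in local coordinates corresponds to splitting the fiber coordinate $\rho_a^\mu$ into its normal component $\rho_a^0$ and its tangential components $\rho_a^k$, $k=1,\ldots,d$. This splitting is intrinsic once the collar and the transverse vector field $\partial/\partial t$ are fixed.

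Then I would identify each summand with the appropriate space of semi-basic forms on $i^*E$, mimicking the identification $P(E)\hookrightarrow \bigwedge_1^m E$ reviewed above but using $\mathrm{vol}_{\partial M}$ in place of $\mathrm{vol}_M$. Concretely, the map $\rho_a\mapsto \rho_a\,du^a\wedge\mathrm{vol}_{\partial M}$ realizes the first summand inside $\bigwedge_1^m(i^*E)$, while $\eta_a^k\mapsto \eta_a^k\,du^a\wedge\mathrm{vol}_{\partial M,k}$, with $\mathrm{vol}_{\partial M,k}:=i_{\partial/\partial x^k}\mathrm{vol}_{\partial M}$, realizes the second summand inside $\bigwedge_1^{m-1}(i^*E)$. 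Putting these together gives the desired bundle isomorphism
\[
\rho_a^\mu\,du^a\wedge\mathrm{vol}_\mu\;\longmapsto\;\bigl(\rho_a^0\,du^a\wedge\mathrm{vol}_{\partial M},\;\rho_a^k\,du^a\wedge\mathrm{vol}_{\partial M,k}\bigr),
\]
whose construction is manifestly fiberwise linear and smooth in $(x,u)$.

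The only delicate point, and the closest thing to an obstacle, is verifying that the decomposition is intrinsic, i.e., independent of the particular tangential coordinates $x^k$ on $\partial M$ once the collar and the compatible $\mathrm{vol}_{\partial M}$ are fixed. This follows from coordinate-free characterizations of each piece: the ``normal'' summand is the image of contraction with $\partial/\partial t$ followed by pull-back via $i$, while the ``tangential'' summand is what survives $i^*$ (which kills $dt$). Once that is in place, the rest is bookkeeping to confirm that the fiberwise identification globalizes to a smooth bundle isomorphism over $i^*E$.
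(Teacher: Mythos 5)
The paper offers no proof of this lemma at all---it simply defers to [Ib15]---so your argument must be judged on its own terms. The route you take is the natural (and surely the intended) one: $i^*P(E)$ is a vector bundle over $i^*E$ because restriction of a vector bundle to a submanifold of the base is again a vector bundle; the collar splits $T_xM=\mathbb{R}\,\partial/\partial t\oplus T_x\partial M$ at boundary points; tensoring with $(V_uE)^*$ splits the fiber $T_xM\otimes(V_uE)^*$ into a normal piece of rank $r$ and a tangential piece of rank $dr$; and $\mathrm{vol}_{\partial M}$ converts each piece into semi-basic forms on $i^*E$. The dimension count $mr=r+dr$ matches, and the splitting depends only on the data assumed in the statement. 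This is correct in its main lines.

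Two details need repair. First, your coordinate-free characterization of the summands is inverted. With $\mathrm{vol}_{U_\epsilon}=dt\wedge\mathrm{vol}_{\partial M}$ one has $\mathrm{vol}_0=i_{\partial/\partial t}\mathrm{vol}_M=\mathrm{vol}_{\partial M}$, while $\mathrm{vol}_k=-\,dt\wedge i_{\partial/\partial x^k}\mathrm{vol}_{\partial M}$ contains $dt$. Hence the pull-back $i^*\varpi$ (which kills $dt$) retains precisely the \emph{normal} component $\rho_a^0\,du^a\wedge\mathrm{vol}_{\partial M}$, whereas the \emph{tangential} component is the part killed by $i^*$ and is recovered as $i^*(i_{\partial/\partial t}\varpi)=\rho_a^k\,du^a\wedge i_{\partial/\partial x^k}\mathrm{vol}_{\partial M}$ (modulo basic terms); swap the two clauses and your naturality argument goes through. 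Second, your map into $\bigwedge_1^{m-1}(i^*E)$ is injective but not surjective: semi-basic $(m-1)$-forms on $i^*E\to\partial M$ also contain the basic line $\eta_0\,\mathrm{vol}_{\partial M}$, so the tangential summand is really the quotient $\bigwedge_1^{m-1}(i^*E)/\bigwedge_0^{m-1}(i^*E)$ (i.e., the $P$-construction applied to $i^*E$ in degree $m-1$); the complement you pick out in coordinates is not invariant under changes of bundle trivialization of $E$. This off-by-one is already present in the lemma as printed, so it is not your error, but a careful proof should either pass to the quotient or flag the abuse of notation.
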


\begin{proof} See[Ib15]
\end{proof}

If we denote by $\mathcal{F}_{\partial M}$ the space of configurations of the theory, $\varphi^a$, i.e., $\mathcal{F}_{\partial M} = \Gamma(i^*E)$, then the space of momenta of the theory $p_a$ can be identified with the space of sections of the bundle $\bigwedge_1^m(i^*E) \to i^*E$, according to Lemma 2.1.   Therefore the space of fields $(\varphi^a, p_a)$ can be identified with the cotangent bundle $T^*\mathcal{F}_{\partial M}$ over $\mathcal{F}_{\partial M}$ in a natural way, i.e., each field $p_a$ can be considered as the covector at $\varphi^a$ that maps the tangent vector $\delta\varphi^a$ at $\varphi^a$ into the number $\langle p, \delta \varphi \rangle$ given by,
\begin{equation}\label{pairing_cotangent}
\langle p, \delta \varphi \rangle = \int_{\partial M} p_a(x)\delta\varphi^a (x) \mathrm{vol}_{\partial M} \, .
\end{equation}
	
Notice that the tangent vector $\delta \varphi$ at $\varphi$ is a vertical vector field on $E$ along $\varphi$, and the section $p$ is a 1-semibasic $m$-form on $E$ (Lemma 2.1). Hence the contraction of $p$ with $\delta\varphi$ is an $(m-1)$-form along $\varphi$, and its pull-back $\varphi^*\langle p, \delta\varphi \rangle$ along $\varphi$ is an $(m-1)$-form on $\partial M$ whose integral defines the pairing above, Eq. (2.10). 	
	
 Viewing the cotangent bundle $T^*\mathcal{F}_{\partial M}$ as double sections $(\varphi, p)$ of the bundle $\bigwedge_1^m(i^*E)$ described by Lemma 2.1, the canonical 1-form $\alpha$ on $T^*\mathcal{F}_{\partial M}$ can be expressed as,
\begin{equation}\label{alpha}
\alpha_{(\varphi, p)} (U) = \int_{\partial M} p_a (x) \delta\varphi^a (x) \, \mathrm{vol}_{\partial M}
\end{equation}
where $U$ a tangent vector to $T^*\mathcal{F}_{\partial M}$ at $(\varphi, p)$, that is, a vector field on the space of 1-semibasic forms on $i^*E$ along the section $(\varphi^a, p_a)$, and therefore of the form $U = \delta\varphi^a \, \partial /\partial u^a + \delta p_a \, \partial /\partial \rho_a$.
	
Finally, notice that the pull-back to the boundary map $i^*$, defines a natural map from the space of fields in the bulk, $\mathcal{F}_{P(E)}$, into the phase space of fields at the boundary $T^*\mathcal{F}_{\partial M}$.  Such map will be denoted by $\Pi$ in what follows, that is, 
$$
\Pi \colon \mathcal{F}_{P(E)}\to T^*\mathcal{F}_{\partial M} \, , \qquad \Pi(\Phi, P) = (\varphi, p) , \, \quad  \varphi = \Phi\circ i, \, p_a = P_a^0\circ i \, .
$$

With the notations above, by comparing the expression for the boundary term given by eq. \ref{boundary_final}, and the expression for the canonical 1-form $\alpha$, eq. (2.11), we obtain,
$$
\int_{\partial M} (\chi\circ i)^* \left(i_{\tilde U} \Theta_H\right) = (\Pi^*\alpha)_\chi (U) \, .
$$
In words, the boundary term in eq.(2.7) is just the pull-back of the canonical 1-form $\alpha$ at the boundary along the projection map $\Pi$.

In what follows it will be customary to use the variational derivative notation when dealing with spaces of fields. For instance, if $F(\varphi,p)$ is a differentiable function defined on $\mathcal{F}_{\partial M}$ we will denote by $\delta F / \delta \varphi^a$ and $\delta F / \delta p_a$ functions (if they exist) such that
\begin{equation}\label{dF}
 d F_{(\varphi,p)}(\delta \varphi^a, \delta p_a) = \int_{\partial M} \left( \frac{\delta F}{\delta \varphi^a} \delta \varphi^a + \frac{\delta F}{\delta p_a} \delta p_a \right) \mathrm{vol}_{\partial M} \, ,
\end{equation}
with $U = (\delta \varphi^a, \delta p_a)$ a tangent vector at $(\varphi,p)$. 
We also use an extended Einstein summation convention such that integral signs will be omitted when dealing with variational differentials. For instance,
\begin{equation}\delta F =   \frac{\delta F}{\delta \varphi^a} \delta \varphi^a + \frac{\delta F}{\delta p_a} \delta p_a \, ,
\end{equation} 
may replace $d F$ as in Eq. (2.12).   Also in this vein we will write,
$$
\alpha = p_a \, \delta \varphi^a \, ,
$$
and the canonical symplectic structure $\omega_{\partial M} = -d \alpha$ on $T^*\mathcal{F}_{\partial M}$ will be written as,
$$
\omega_{\partial M} = \delta \varphi^a \wedge \delta p_a \, ,
$$
by which we mean
$$
\omega_{\partial M} ((\delta_1\varphi^a, \delta_1p_a), (\delta_2\varphi^a, \delta_2p_a)) = \int_{\partial M} \left( 
\delta_1\varphi^a(x) \delta_2 p_a(x) - \delta_2\varphi^a (x) \delta_1p_a(x) \right) \mathrm{vol}_{\partial M} \, ,
$$
where $(\delta_1\varphi^a, \delta_1p_a), (\delta_2\varphi^a, \delta_2p_a)$ are two tangent vectors at $(\varphi^a, p_a)$.

%%%%%%%%%%%%%%%%%%
%%%%%%%%%%%%%%%%%%

\subsubsection{Euler-Lagrange's equations and Hamilton's equations}
We now examine the contribution from the first term in $ d S$, eq. (2.7).
Notice that such a term can be 
thought of as a 1-form on the space of fields on the bulk, $\mathcal{F}_{P(E)}$.  We will call it the Euler-Lagrange 1-form and denote it by $\mathrm{EL}$, thus with the notation of eqn $(2.7)$,
$$
\mathrm{EL}_\chi (U) = \int_M \chi^* \left(i_{\tilde U} \d \Theta _H \right)  \, .
$$
A double section $\chi = (\Phi, P)$ of $P(E) \to E \to M$ will be said to satisfy the Euler-Lagrange equations determined by the first-order Hamiltonian field theory defined by $H$, if $\mathrm{EL}_\chi = 0$, that is, if $\chi$ is a
zero of the Euler-Lagrange 1-form $\mathrm{EL}$ on $\mathcal{F}_{P(E)}$.   Notice that this is equivalent to 
\begin{equation}\label{formEL}
	\chi^*(i_{\tilde{U}} d \Theta _H)=0 \, ,
\end{equation}
for all vector fields $\tilde{U}$ on a tubular neighborhood of the image of $\chi$ in $P(E)$. 
The set of all such solutions of Euler-Lagrange equations will be denoted by $\mathcal{EL}_M$ or just $\mathcal{EL}$ for short.\
 
 If the metric $\eta$ on $M$ is just the Minkowski metric so that $\sqrt{|\eta|} = 1$ or if we change to normal coordinates on $M$ which we can always find, then the volume element takes the form $\mathrm{vol_M}= dx^0 \wedge\cdots \wedge dx^d$. For local coordinates $(x^\mu,u^a,\rho^\mu_a)$ on  $P(E)$, using eqs. \eqref{rhoH}, \eqref{ThetaH}, we then have,
\begin{eqnarray*}
	i_{\partial/\partial \rho^\mu_a}\d \Theta _H &=& -\frac{\partial
	 H}{\partial \rho^\mu_a} d^mx + \d u^a\wedge \d^{m-1}x_\mu  \\
	i_{\partial/\partial u^a} \d \Theta _H &=& -\frac{\partial
	 H}{\partial u^a} d^mx - d\rho^\mu_a\wedge \d^{m-1}x_\mu.
\end{eqnarray*}
Applying (2.13) to these last two equations we obtain the Hamilton equations for the field in the bulk\footnote{The equations obtained by taking $\tilde{U}$ to be $\partial/\partial x^\mu$ are consequences of these, and simply express the partial derivatives of
$H\circ\chi$ as \lq total\rq\  derivatives of $H$.}:

\begin{equation}\label{hamilton_equations}
	\frac{\partial u^a}{\partial x^\mu} = \frac{\partial H}{\partial \rho^\mu_a}\, ; \qquad
	\frac{\partial \rho^\mu_a}{\partial x^\mu} = -\frac{\partial H}{\partial u^a} \, ,
\end{equation}
where a summation on $\mu$ is understood in the last equation.   Note that had we not changed to normal coordinates on $M$, the volume form would not have the above simple form and therefore there would be related extra terms in the previous expressions and in equations (2.15).

These Hamilton equations are often described as being covariant. This term
must be treated with caution in this context. Clearly, by writing the equations
in the invariant form $\chi^*(i_{\tilde{U}}\d \Theta _H)=0$ we have shown that they are in
a sense covariant. However, it is important to remember that the function $H$
is, in general, only locally defined; in other words, there is in general no true
`Hamiltonian function', and the local representative $H$ transforms in a
non-trivial way under coordinate transformations. When $M(E)$ is a trivial
bundle over $P(E)$, so that there is a predetermined global section, then
the Hamiltonian section may be represented by a global function and no problem
arises. This occurs for instance when $E$ is trivial over $M$.
In general, however, there is no preferred section of $M(E)$ over
$P(E)$ to relate the Hamiltonian section to, and in order to write the
Hamilton equations in manifestly covariant form one must introduce a
connection. (See [Ca91] for a more detailed discussion.) 

%%%%%%%%%%%%%%%%%%%%%

\subsection{The fundamental formula}\label{sec:fundamental}

Thus we have obtained the formula that relates the differential of the action with a 1-form on a space of fields on the bulk manifold and a 1-form on a space of fields at the boundary.
\begin{equation}\label{fundamental}
\mathrm{d} S_\chi = \mathrm{EL}_\chi +  \Pi^* \alpha_\chi \, , \qquad \chi \in \mathcal{F}_{P(E)} \, . 
\end{equation}
In the previous equation $\mathrm{EL}_\chi$ denotes the Euler-Lagrange 1-form on the space of fields $\chi = (\Phi, P)$ with local expression (using variational derivatives):
\begin{equation}\label{ELform}
\mathrm{EL}_\chi = \left(  \frac{\partial \Phi^a}{\partial x^\mu}  - \frac{\partial H}{\partial P^\mu_a}  \right) \delta P_a^\mu  - \left( 	\frac{\partial P^\mu_a}{\partial x^\mu} + \frac{\partial H}{\partial \Phi^a}  \right) \delta \Phi^a \, ,
\end{equation}
or, more explicitly:
$$
\mathrm{EL}_\chi (\delta \Phi, \delta P) = \int_M \left[ \left(  \frac{\partial \Phi^a}{\partial x^\mu}  - \frac{\partial H}{\partial P^\mu_a}  \right) \delta P_a^\mu  - \left( \frac{\partial P^\mu_a}{\partial x^\mu} + \frac{\partial H}{\partial \Phi^a}  \right) \delta \Phi^a \right] \, \mathrm{vol}_M \, .
$$

In what follows we will denote by $(P(E), \Theta_H)$ the covariant Hamiltonian field theory with bundle structure $\pi \colon E \to M$ defined over the $m$-dimensional manifold with boundary $M$, Hamiltonian function $H$ and canonical $m$-form $\Theta_H$.    

We will say that the action $S$ is regular if the set of solutions of Euler-Lagrange equations $\mathcal{EL}_M$ is a submanifold of $\mathcal{F}_{P(E)}$. 
Thus we will also assume when needed that the action $S$ is regular (even though this must be proved case by case) and that the projection $\Pi(\mathcal{EL})$ to the space of fields at the boundary $T^*\mathcal{F}_{\partial M}$ is a smooth manifold too.

%%%%%%%%%%%%%%%%%%%%%%%%%%

\section{The presymplectic formalism at the boundary}\label{sec:presymplectic}

%%%%%%%%%%%%%%%%%%%%%%%%
%%%%%%%%%%%%%%%%%%%%%%%%

\subsection{The evolution picture near the boundary}\label{sec:dynamical_eqs}

  We discuss in what follows the evolution picture of the system near the boundary.  As discussed in Section \ref{sec:cotangent_boundary}, we assume that there exists a collar $U_\epsilon \cong (-\epsilon , 0]\times\partial M$ of the boundary \ $\partial M$ with adapted coordinates $(t; x^1,\ldots, x^d)$, where $t = x^0$ and where $x^i$, $i = 1,\ldots, x^d$ define a local chart in $\partial M$. The normal coordinate $t$ can be used as an evolution parameter in the collar. We assume again that the volume form in the collar is of the form $\mathrm{vol}_{U_{\epsilon}} = dt\wedge \mathrm{vol}_{\partial M}$. 
  
 If $M$ happens to be a globally hyperbolic space-time $M \cong [t_0,t_1]\times \Sigma$ where $\Sigma$ is a Cauchy surface, $[t_0,t_1] \subset \mathbb{R}$ denotes a finite interval in the real line, and the metric has the form $-dt^2 + g_{\partial M}$ where $g_{\partial M}$ is a fixed Riemannian metric on $\partial M$, then $t$ represents a time evolution parameter throughout the manifold and the volume element has the form $\mathrm{vol}_M = dt \wedge \mathrm{vol}_{\partial M}$. Here, however, all we need to assume is that our manifold has a collar at the boundary as described above.

Restricting the action $S$ of the theory to fields defined on $U_\epsilon$, i.e., sections of the pull-back of the bundles $E$ and $P(E)$ to $U_\epsilon$, we obtain,
\begin{equation}\label{Sepsilon}
S_\epsilon (\chi ) = \int_{U_\epsilon} \chi^*\Theta_H = \int_{-\epsilon}^0 \d t \int_{\partial M} \mathrm{vol}_{\partial M} \left[ P_a^0 \partial_0 \Phi^a + P_a^k\partial_k \Phi^a - H(\Phi^a, P_a^0, P_a^k) \right]\, . 
\end{equation}
Defining the fields at the boundary as discussed in Lemma \ref{decomposition},
$$
\varphi^a = \Phi^a|_{\partial M} \, , \qquad p_a = P_a^0|_{\partial M} \, , \qquad \beta_a^{k} = P_a^{k}|_{\partial M} \, ,
$$
we can rewrite \eqref{Sepsilon} as
$$
S_\epsilon (\chi) = \int_{-\epsilon}^0 \d t \int_{\partial M} \mathrm{vol}_{\partial M}[ p_a \dot{\varphi}^a + \beta_a^k\partial_k\varphi^a -H(\varphi^a,p_a,\beta_a^k)] \, .
$$
Letting $\langle p, \dot{\varphi} \rangle = \int_{\partial M} p_a \dot{\varphi}^a \,\,  \mathrm{vol}_{\partial M}$ denote, as in \eqref{pairing_cotangent}, the natural pairing and, similarly, 
$$\langle \beta, \mathrm{d}_{\partial M}\varphi \rangle = \int_{\partial M} \beta_a^k \partial_k \varphi^a \,  \mathrm{vol}_{\partial M},
$$
we can define a density function $\mathcal{L}$ as,
\begin{equation}\label{densityL}
\mathcal{L}(\varphi,\dot{\varphi},p,\dot{p},\beta,\dot{\beta})=\langle p,\dot{\varphi}\rangle + \langle\beta, \mathrm{d}_{\partial M}\varphi \rangle - \int_{\partial M} H(\varphi^a,p_a,\beta_a^k) \, \mathrm{vol}_{\partial M} \, ,
\end{equation}
and then
$$
S_\epsilon (\chi) = \int_{-\epsilon}^0 \d t \, \, {\mathcal{L}}(\varphi,\dot{\varphi},p,\dot{p},\beta,\dot{\beta}) \, .
$$

Notice again that because of the existence of the collar $U_\epsilon$ near the boundary and the assumed form of  $\mathrm{vol}_ {U_\epsilon}$,  the elements in the bundle $i^*P(E)$ have the form $\rho_a^0 \d u^a \wedge \mathrm{vol}_{\partial M} + \rho_a^k \d u^a \wedge \d t \wedge i_{\partial /\partial x^k}\mathrm{vol}_{\partial M}$ and, as discussed in Lemma \ref{decomposition}, the bundle $i^*P(E)$ over $i^*E$ is isomorphic to the product $\bigwedge_1^m(i^*E) \times B$, where $B = \bigwedge_1^{m-1}(i^*E)$.   The space of double sections $(\varphi,p)$ of the bundle $\bigwedge_1^m(i^*E) \to i^*E \to \partial M$ correspond to the cotangent bundle $T^*\mathcal{F}_{\partial M}$ and the double sections $(\varphi, \beta)$ of the bundle $B \to i^*E \to \partial M$ correspond to a new space of fields at the boundary denoted by $\mathcal{B}$.

We will introduce now the total space of fields at the boundary $\mathcal{M}$ which is the space of double sections of the iterated bundle $i^*P(E) \to i^*E \to \partial M$. Following the previous remarks it is obvious that $\mathcal{M}$ has the form,
$$\mathcal{M}= \mathcal{T}^*\mathcal{F}_{\partial M} \times_{\mathcal{F}_{\partial M}} \mathcal{B} = \{(\varphi, p, \beta)\} \, .$$

Thus the density function $\mathcal{L}$, Eq. \eqref{densityL}, is defined on the tangent space $T\mathcal{M}$ to the total space of fields at the boundary and could be called accordingly the boundary Lagrangian of the theory.

Consider the action $ A = \int_{-\epsilon}^0 \mathcal{L} \,\, \d t$ defined on the space of curves $\sigma\colon ( -\epsilon, 0] \to \mathcal{M}$.
If we compute $\mathrm{d}A$ we obtain a bulk term, that is, an integral on $(-\epsilon, 0]$, and a term evaluted at $\partial [-\epsilon,0] = \{-\epsilon, 0\}$. Setting the bulk term equal to zero, we obtain the Euler-Lagrange equations of this system considered as a Lagrangian system on the space $\mathcal{M}$ with Lagrangian function $\mathcal{L}$, 
\begin{equation}\label{Euler-Lagrange Equations 1}
 \frac{\d}{\d t} \frac{{\delta}{\mathcal{L}}}{{\delta}{\dot{{\varphi}^a}}}= \frac{{\delta}{\mathcal{L}}}{{\delta}{{\varphi}^a}} \, ,
 \end{equation}
 which becomes,
 \begin{equation}\label{pidot}
\dot{p_a}= -{\partial}_k{\beta}_a^k - \frac{{\partial}H}{{\partial}{\varphi}^a} \, .
\end{equation}
Similarly, we get for the fields $p$ and $\beta$:
$$
%\begin{equation}%\label{Euler-Lagrange Equation 2}
\frac{\d}{\d t}\frac{{\delta}{\mathcal{L}}}{{\delta}{\dot{p}_a}}=\frac{{\delta}{\mathcal{L}}}{{\delta}{p_a}} \, , \quad
 \frac{\d}{\d t}\frac{{\delta}{\mathcal{L}}}{{\delta}{\dot{{\beta}}_a^k}}=\frac{{\delta}{\mathcal{L}}}{{\delta}{{\beta}_a^k}}%\end{equation}
 $$
that become respectively,
\begin{equation}\label{phidot}
\dot{\varphi}_a = \frac{\partial H}{\partial p_a} \, ,
\end{equation}
and, the constraint equation:
\begin{equation}\label{dconstraint}
\d_{\partial M}{\varphi}-\frac{\partial H}{{\partial}{{\beta}_a^k}}=0 \, .
\end{equation}
Thus, Euler-Lagrange equations in a collar $U_\epsilon$ near the boundary, can be understood as a system of evolution equations on $T^*\mathcal{F}_{\partial M}$ depending on the variables $\beta_a^k$, together with a constraint condition on the extended space $\mathcal{M}$.  The analysis of these equations, Eqs. (3.4), (3.5) and (3.6), is best understood in a presymplectic framework.

\subsection{The presymplectic picture at the boundary and constraints analysis}\

We will introduce now a presymplectic framework on $\mathcal{M}$ that will be helpful in the study of Eqs.(3.4)-(3.6).

Let $\varrho :\mathcal{M} \longrightarrow \mathcal{T}^*\mathcal{F}_{\partial M}$ denote the canonical projection $\varrho(\varphi,p,\beta)=(\varphi,p)$.  (See Figure 2.)
Let $\Omega$ denote the pull-back of the canonical symplectic form ${\omega}_{\partial M}$ on $\mathcal{T}^*\mathcal{F}_{\partial M}$ to $\mathcal{M}$, i.e., let $\Omega=\varrho^*\omega_{\partial M}$.   Note that the form $\Omega$ is closed but degenerate, that is, it defines a presymplectic structure on $\mathcal{M}$. An easy computation shows that the characteristic distribution $\mathcal{K}$ of $\Omega$, is given by
$$
\mathcal{K} = \ker\Omega= \mathrm{span} \left\{ \frac{\delta}{{\delta}{\beta}_a^k} \right\} \, .
$$

Let us consider the function defined on $\mathcal{M}$, 
$$
\mathcal{H}(\varphi,p,\beta)= -\langle \beta, d_{\partial M}\varphi \rangle + \int_{\partial M} H({\varphi}^a, p_a, {\beta}_a^k)\, \mathrm{vol}_{\partial M} \, .
$$

\begin{figure}[ht]
\centering
\includegraphics[width=10cm]{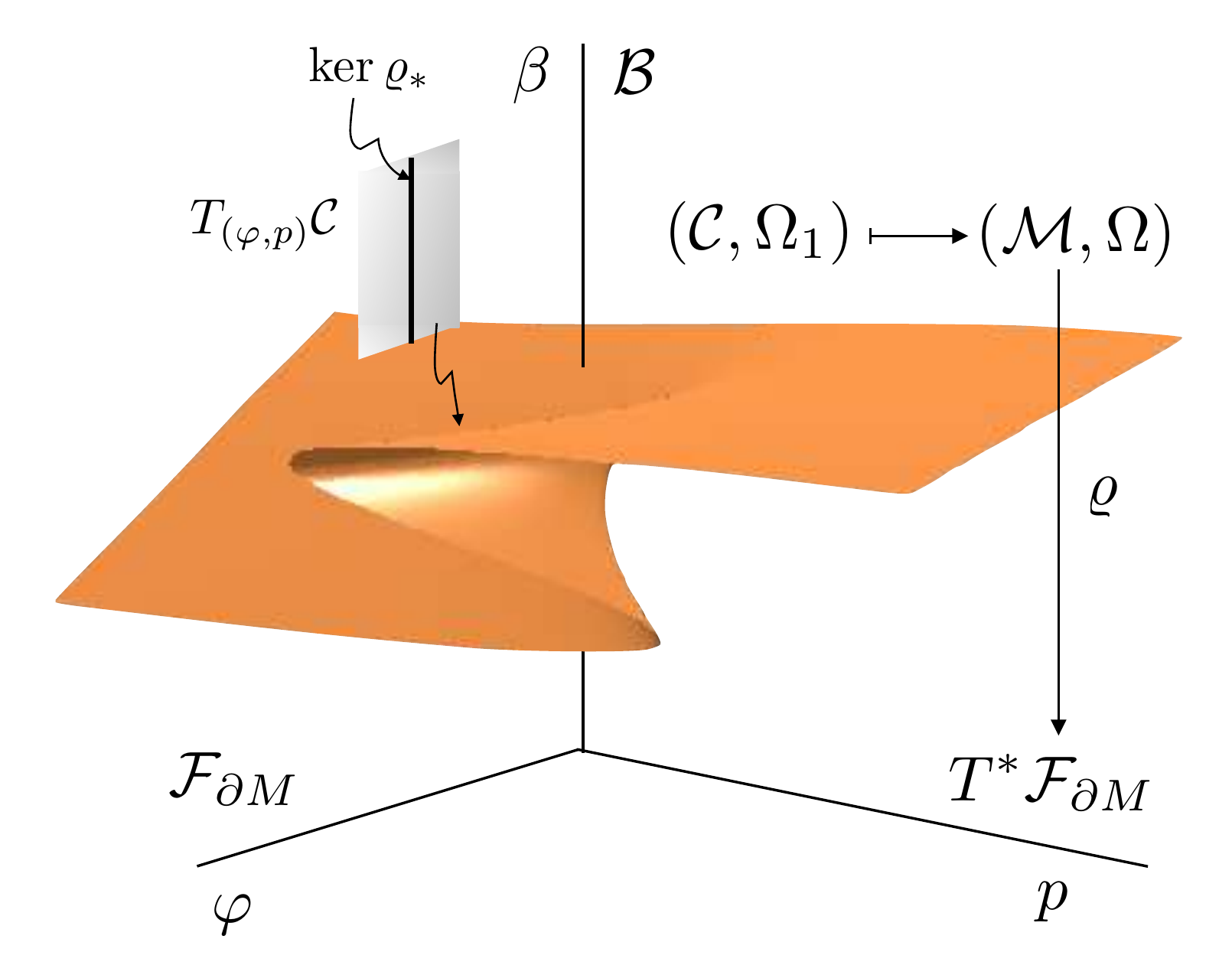}
\caption{The space of fields at the boundary $\mathcal{M}$ and its relevant structures.}\label{diagram}
\end{figure}

We will refer to $\mathcal{H}$ as the boundary Hamiltonian of the theory.
 Thus $\mathcal{L}$ can be rewritten as
$$\mathcal{L}(\varphi, \dot{\varphi},p,\dot{p},\beta, \dot{\beta})=\langle p,\dot{\varphi} \rangle - \mathcal{H}(\varphi,p,\beta)$$ and 
\begin{equation}\label{Lagrangian with Boundary Hamiltonian}
 S_{\epsilon}(\varphi, p, \beta) =\int_{-\epsilon}^0 [\langle p,\dot{\varphi} \rangle - \mathcal{H}(\varphi,p,\beta)] d t \, ,
 \end{equation}
and therefore the Euler-Lagrange equations (3.8) and (3.9) can be written as
\begin{equation}\label{Hamilton's Evolution Equations}
 \dot{\varphi}^a = \frac{\delta \mathcal{H}}{\delta p_a}\, ,\qquad 
\dot{p}_a = -\frac{\delta \mathcal{H}}{\delta{\varphi}^a}  \, ,
\end{equation}
and
\begin{equation}\label{Hamilton's Constraint Equation}
 0= \frac{\delta\mathcal{H}}{\delta{\beta}_a^k} \, .
\end{equation}

Now it is easy to prove the following:

\begin{theorem}\label{presymplectic_equation}
The solutions to the equations of motion defined by the Lagrangian $\mathcal{L}$ over a collar $U_\epsilon$ at the boundary, $\epsilon$ small enough, are in one-to-one correspondence with the integral curves of the presymplectic system $(\mathcal{M},\Omega,\mathcal{H})$, i.e., with the integral curves of the vector field $\Gamma$ on $\mathcal{M}$ satisfying 
\begin{equation}\label{presymplectic_equation1}
i_\Gamma \Omega = \d\mathcal{H} \, .
\end{equation}
\end{theorem}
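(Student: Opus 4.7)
The plan is to reduce the theorem to a direct local computation in coordinates on $\mathcal{M}$, matching the three components of the equation $i_\Gamma\Omega = \mathrm{d}\mathcal{H}$ against the three Euler--Lagrange equations (3.4), (3.5), (3.6) already derived from $\mathcal{L}$ in Section 3.1.

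First I would fix a curve $\sigma(t)=(\varphi(t),p(t),\beta(t))$ in $\mathcal{M}$ with tangent vector field
$$
\Gamma = \dot\varphi^a\,\frac{\delta}{\delta\varphi^a} + \dot p_a\,\frac{\delta}{\delta p_a} + \dot\beta_a^k\,\frac{\delta}{\delta\beta_a^k}.
$$
Because $\varrho(\varphi,p,\beta)=(\varphi,p)$ and $\omega_{\partial M}=\delta\varphi^a\wedge\delta p_a$, the pull-back reads $\Omega=\delta\varphi^a\wedge\delta p_a$, manifestly independent of $\beta$, which matches the already-computed kernel $\mathcal{K}=\mathrm{span}\{\delta/\delta\beta_a^k\}$. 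A one-step interior product then gives
$$
i_\Gamma\Omega = \dot\varphi^a\,\delta p_a - \dot p_a\,\delta\varphi^a,
$$
and writing out $\mathrm{d}\mathcal{H}$ from the definition of $\mathcal{H}$, after the standard integration by parts on $\partial M$ applied to $-\langle\beta,\mathrm{d}_{\partial M}\varphi\rangle$, yields a 1-form whose coefficients of $\delta\varphi^a$, $\delta p_a$, $\delta\beta_a^k$ are $\partial_k\beta_a^k + \partial H/\partial\varphi^a$, $\partial H/\partial p_a$, and $-\partial_k\varphi^a + \partial H/\partial\beta_a^k$ respectively.

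Equating coefficients of the independent basis 1-forms $\delta\varphi^a$, $\delta p_a$, $\delta\beta_a^k$ produces exactly the Hamilton evolution equations (3.9) together with the constraint (3.10), which by the derivation in Section 3.1 are equivalent to the Euler--Lagrange equations (3.4)--(3.6) for $\mathcal{L}$. The only subtlety---which I would treat as the main conceptual point rather than a genuine obstacle---is that $\Omega$ is presymplectic: the equation $i_\Gamma\Omega=\mathrm{d}\mathcal{H}$ is consistent only where $\mathrm{d}\mathcal{H}$ annihilates $\mathcal{K}$, i.e.\ on the primary constraint locus $\delta\mathcal{H}/\delta\beta_a^k=0$, and on that locus $\Gamma$ is determined only modulo sections of $\mathcal{K}$. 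This residual freedom in the components $\dot\beta_a^k$ is precisely the same freedom that $\mathcal{L}$ leaves in the $\beta$-velocities, since $\mathcal{L}$ depends on $\beta$ but not on $\dot\beta$. Consequently every Euler--Lagrange solution lifts uniquely to an integral curve of some $\Gamma$ satisfying (3.11) and lying in the constraint locus, and, conversely, every such integral curve projects to an Euler--Lagrange solution, establishing the one-to-one correspondence.
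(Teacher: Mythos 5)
Your proposal is correct and follows essentially the same route as the paper: expand $\Gamma$ in the coordinate basis $\delta/\delta\varphi^a$, $\delta/\delta p_a$, $\delta/\delta\beta_a^k$, compute $i_\Gamma\Omega$ for $\Omega=\delta\varphi^a\wedge\delta p_a$, and equate components with $\mathrm{d}\mathcal{H}$ to recover the Hamilton evolution equations together with the constraint $\delta\mathcal{H}/\delta\beta_a^k=0$, which were already shown equivalent to the Euler--Lagrange equations of $\mathcal{L}$. Your added remarks on the consistency of the presymplectic equation on the constraint locus and the residual $\mathcal{K}$-freedom in $\dot\beta$ are a welcome clarification that the paper defers to its subsequent constraint-algorithm discussion.
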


\begin{proof}  Let $\Gamma = A^a\frac{\delta}{\delta{\varphi}^a} + B^a\frac{\delta}{{\delta}p^a} + C^a\frac{\delta}{{\delta}{\beta}_a^k}$ be a vector field on $\mathcal{M}$ (notice that we are using an extension of the functional derivative notation introduced in Section \ref{sec:cotangent_boundary} on the space of fields $\mathcal{M}$).  Then because $\Omega = \delta{\varphi}^a \wedge \delta p_a$, we get from $i_{\Gamma}\Omega= \d\mathcal{H}$ that,
$$
A^a = \frac{{\delta}{\mathcal{H}}}{{\delta} p_a},\qquad
B^a = -\frac{{\delta}{\mathcal{H}}}{{\delta}{\varphi}^a} \, ,\qquad 0 = \frac{{\delta}{\mathcal{H}}}{{\delta}{\beta}_a^k} \, .
$$

Thus, $\Gamma$ satisfies Eq. \eqref{presymplectic_equation1} iff 
$$
\dot{\varphi}^a =\frac{{\delta}{\mathcal{H}}}{{\delta}p_a}, \qquad
\dot{p}_a = -\frac{{\delta}{\mathcal{H}}}{{\delta}{\varphi}^a} \, , \quad 
\mathrm{and} \quad 
0= \frac{{\delta}{\mathcal{H}}}{{\delta}{\beta}_a^k} \, .
$$
\end{proof}

Let us denote by $\mathcal{C}$ the submanifold of the space of fields $\mathcal{M} =T^*\mathcal{F}_{\partial M} \times \mathcal{B}$ defined by eq. $(3.9)$.  It is clear that the
restriction of the solutions of the Euler-Lagrange equations on $M$ to the boundary ${\partial}M$, are contained in $\mathcal{C}$; i.e., $ \Pi (\mathcal{EL})\subset\mathcal{C}.$

Given initial data $\varphi, p$ and fixing $\beta$, existence and uniqueness theorems for initial value problems when applied to the initial value problem above, would show the existence of solutions for small intervals of time, i.e., in a collar near the boundary.   

However, the constraint condition given by eq. \eqref{Hamilton's Constraint Equation}, satisfied automatically by critical points of $S_\epsilon$ on $U_\epsilon$, must be satisfied along the integral curves of the system, that is, for all $t$ in the neighborhood $U_\epsilon$ of $\partial M$.   This implies that consistency conditions on the evolution  must be imposed.  Such consistency conditions are just that the constraint condition eq. \eqref{Hamilton's Constraint Equation}, is preserved under the evolution defined by eqs. \eqref{Hamilton's Evolution Equations}. This is the typical situation that we will find in the analysis of dynamical problems with constraints and that we are going to summarily analyze in what follows.

%%%%%%%%%%%%%%%%%%%%
%%%%%%%%%%%%%%%%%%%%

\subsubsection{The Presymplectic Constraints Algorithm (PCA)}

Let $i$ denote the canonical immersion $\mathcal{C}=\{(\varphi,p,\beta)| \frac{\delta {\mathcal{H}}}{\delta{\beta}}=0\}\to \mathcal{M}$ and consider the pull-back of $\Omega$
to $\mathcal{C}$, i.e., $\Omega_1 = i^*\Omega$. Clearly then, $\ker \Omega_1 = \ker \varrho_* \cap T\mathcal{C}$.
But $\mathcal{C}$ is defined as the zeros of the function $\delta \mathcal{H}/\delta \beta$. Therefore if $\delta^2 \mathcal{H}/\delta^2\beta$ is nondegenerate (notice that the operator $\delta^2 \mathcal{H}/\delta
{\beta}_a^i{\delta}{\beta}_b^j$ becomes the matrix $\partial^2 H /\partial \beta_a^i \partial\beta_b^j$), by an appropriate extension of the Implicit Function Theorem, we could solve $\beta$ as a function of $\varphi$ and $p$.  In such case, locally, $\mathcal{C}$ would be the graph of a function $F\colon T^*\mathcal{F}_{\partial M} \to \mathcal{B}$, say $\beta = F(\varphi, p)$. 
Collecting the above yields:

\begin{proposition}  The submanifold $(\mathcal{C}, \Omega_1)$ of $(\mathcal{M},\Omega,\mathcal{H})$ is symplectic iff $H$ is regular, i.e., 
$\partial^2 H /\partial \beta_a^i \partial\beta_b^j$ is non-degenerate.  In such case the projection $\varrho$ restricted to $\mathcal{C}$, which we denote by $\varrho_C$, is a local symplectic diffeomorphism and therefore $\varrho_C^*\omega_{\partial M} = \Omega_1$.
\end{proposition}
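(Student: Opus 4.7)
The plan is to reduce symplecticity of $(\mathcal{C},\Omega_1)$ to nondegeneracy of $\Omega_1$, since $\Omega_1 = i^*\Omega$ is automatically closed as the pullback of a closed form. I would identify $\ker\Omega_1$ explicitly, show it is trivial precisely when the fibrewise Hessian $\partial^2 H/\partial\beta_a^i\partial\beta_b^j$ is invertible, and then derive the symplectomorphism statement from the implicit function theorem plus naturality of the pullback.

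I would first characterize the kernel. Writing the constraint function as
\[
G_a^k(\varphi,p,\beta)=\frac{\delta\mathcal{H}}{\delta\beta_a^k}=-\partial_k\varphi^a+\frac{\partial H}{\partial\beta_a^k},
\]
a tangent vector $v=A^a\frac{\delta}{\delta\varphi^a}+B_a\frac{\delta}{\delta p_a}+C_a^k\frac{\delta}{\delta\beta_a^k}$ on $\mathcal{M}$ is tangent to $\mathcal{C}$ exactly when $dG_a^k(v)=0$ for all $a,k$. For purely vertical vectors in $\mathcal{K}=\mathrm{span}\{\delta/\delta\beta_a^k\}$ (i.e.\ $A=B=0$) this reduces to $(\partial^2 H/\partial\beta_a^k\partial\beta_b^j)\,C_b^j=0$, so that $\mathcal{K}\cap T\mathcal{C}=0$ iff the Hessian is nondegenerate.

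Next I would rule out contributions to $\ker\Omega_1$ outside $\mathcal{K}$. Since $\mathcal{K}=\ker\Omega$, $\mathcal{K}\cap T\mathcal{C}\subseteq\ker\Omega_1$ automatically. Conversely, using $\Omega=\delta\varphi^a\wedge\delta p_a$, the vanishing of $\Omega(v,w)=A\cdot w^p-B\cdot w^\varphi$ for all $w\in T\mathcal{C}$ forces $A=B=0$ provided the $(w^\varphi,w^p)$-components of tangent vectors to $\mathcal{C}$ can be chosen arbitrarily; this holds precisely in the regular case, because then the tangency equation $dG_a^k(w)=0$ can be solved uniquely for $w^\beta$ in terms of $(w^\varphi,w^p)$. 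Hence in the regular case $\ker\Omega_1=\mathcal{K}\cap T\mathcal{C}=0$ and $\Omega_1$ is symplectic, while a degeneracy of the Hessian produces a nonzero $C$ in its kernel, giving $v=C_a^k\,\delta/\delta\beta_a^k\in\mathcal{K}\cap T\mathcal{C}\subseteq\ker\Omega_1$ and obstructing symplecticity.

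For the last assertion, invertibility of $\partial^2 H/\partial\beta\partial\beta$ allows one to apply the implicit function theorem to $G=0$ and write $\mathcal{C}$ locally as the graph of a map $F\colon T^*\mathcal{F}_{\partial M}\to\mathcal{B}$, $\beta=F(\varphi,p)$. Then $\varrho_C=\varrho\circ i$ is a local diffeomorphism, and
\[
\varrho_C^*\omega_{\partial M}=i^*\varrho^*\omega_{\partial M}=i^*\Omega=\Omega_1,
\]
so $\varrho_C$ is a local symplectomorphism. The main obstacle, as the authors signal, is the functional-analytic version of the implicit function theorem: pointwise on $\partial M$ the nondegeneracy required is that of a finite matrix, so the essential content is ensuring enough uniformity and smoothness in the field-theoretic spaces $\mathcal{F}_{\partial M}$ and $\mathcal{B}$ for a genuine local inverse $F$ to exist as a section of $\mathcal{B}\to\mathcal{F}_{\partial M}$.
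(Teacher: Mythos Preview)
Your proposal is correct and follows essentially the same route as the paper: the paper's argument (contained in the paragraph immediately preceding the proposition) asserts $\ker\Omega_1=\ker\varrho_*\cap T\mathcal{C}$, invokes the implicit function theorem on $\delta\mathcal{H}/\delta\beta=0$ under the nondegeneracy hypothesis on $\partial^2 H/\partial\beta\partial\beta$ to exhibit $\mathcal{C}$ locally as a graph $\beta=F(\varphi,p)$, and collects these into the proposition. Your write-up is in fact more careful than the paper's, since you justify the identity $\ker\Omega_1=\mathcal{K}\cap T\mathcal{C}$ only under regularity (where $\varrho_*|_{T\mathcal{C}}$ is surjective) rather than asserting it unconditionally, and you make the converse direction explicit by exhibiting a nonzero vertical kernel vector when the Hessian degenerates.
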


When the situation is not as described above,
and $\beta$ is not a function of $\varphi$ and $p$, then $(\mathcal{C},\Omega_1)$ is indeed a presymplectic submanifold of $\mathcal{M}$ and $i_{\Gamma}{\Omega}=d\mathcal{H}$ will not hold necessarily at every point in $\mathcal{C}$.  In this case 
 we would apply Gotay's Presymplectic Constraints Algorithm [Go78], to obtain  the maximal submanifold of $\mathcal{C}$ for which 
 $i_{\Gamma}{\Omega}=d\mathcal{H}$ is consistent and that can be summarized as follows.

%%%%%%%%%%%%%%%%

Consider a presymplectic system $(\mathcal{M}, \Omega, \mathcal{H})$ where $\mathcal{M}= T^*\mathcal{F}_{\partial M}\times\mathcal{B}$ and, $\Omega$ and $\mathcal{H}$ are as defined above. Let $\mathcal{M}_0 = \mathcal{M}$, $\Omega_0 = \Omega$, $\mathcal{K}_0 = \ker \Omega_0$, and $\mathcal{H}_0 = \mathcal{H}$.  We define the primary constraint submanifold $\mathcal{M}_1$ as the submanifold defined by the consistency condition for the equation $i_\Gamma \Omega_0 = \d\mathcal{H}_0$, i.e., 
$$
\mathcal{M}_1  = \{ \chi \in \mathcal{M}_0 \mid \langle Z_0(\chi) , \d\mathcal{H}_0(\chi) \rangle = 0, \, \,  \forall Z_0 \in \mathcal{K}_0 \} \, .
$$
 Thus $\mathcal{M}_1= \mathcal{C}$. Denote by $i_1 \colon \mathcal{M}_1 \to \mathcal{M}_0$ the canonical immersion. Let $\Omega_1 = i_1^*\Omega_0$, $\mathcal{K}_1 = \ker \Omega_1$, and $\mathcal{H}_1 = i_1^*\mathcal{H}_0$.   We now define recursively the $(k+1)$-th constraint submanifold as the consistency condition for the equation $i_\Gamma \Omega_k = \d\mathcal{H}_k$, that is,
$$
\mathcal{M}_{k+1}  = \{ \chi \in \mathcal{M}_k \mid \langle Z_k(\chi) , \d\mathcal{H}_k(\chi) \rangle = 0, \, \,  \forall Z_k \in \mathcal{K}_k \} \, \qquad k \geq 1 \, ,
$$
and $i_{k+1}\colon \mathcal{M}_{k+1} \to \mathcal{M}_k$ is the canonical embbeding (assuming that $\mathcal{M}_{l+1}$ is a regular submanifold of $\mathcal{M}_k$), and  $\Omega_{k+1} = i_{k+1}^*\Omega_k$, $\mathcal{K}_{k+1} = \ker \Omega_{k+1}$ and $\mathcal{H}_{k+1} = i_{k+1}^*\mathcal{H}_k$.

The algorithm stabilizes if there is an integer $r> 0$ such that 
$\mathcal{M}_{r} = \mathcal{M}_{r+1}$.
We refer to this $\mathcal{M}_r$ as the final constraints submanifold and we denote it by $\mathcal{M}_\infty$. Letting $i_\infty\colon \mathcal{M}_\infty \to \mathcal{M}_0$ denote the canonical immersion, we define,
$$
\Omega_\infty = i_\infty^*\Omega_0, \qquad \mathcal{K}_\infty = \ker \Omega_\infty\, , \qquad \mathcal{H}_\infty = i_\infty^*\mathcal{H}_0 \, .
$$
Notice that the presymplectic system $(\mathcal{M}_\infty, \Omega_\infty,  \mathcal{H}_\infty )$ is always consistent, that is, the dynamical equations defined by $i_\Gamma \Omega_\infty = d\mathcal{H}_\infty$ will always have solutions on $\mathcal{M}_\infty$.  The solutions will not be unique if $\mathcal{K}_\infty \neq 0$, hence the integrable distribution $\mathcal{K}_\infty$ will be called the ``gauge'' distribution of the system, and its sections (that will necessarily close a Lie algebra), the ``gauge'' algebra of the system.\\

The quotient space $\mathcal{R} = \mathcal{M}_\infty / \mathcal{K}_\infty$, provided it is a smooth manifold,  inherits a canonical symplectic structure $\omega_\infty$ such that $\pi_\infty^*\omega_\infty = \Omega_\infty$, where $\pi_\infty \colon \mathcal{M}_\infty \to \mathcal{R}$ is the canonical projection. We will refer to it as the reduced phase space of the theory.   Notice that the Hamiltonian $\mathcal{H}_\infty$ also passes to the quotient and we will denote its projection by $h_\infty$ i.e., $\pi_\infty^* h_\infty = \mathcal{H}_\infty$.  

\begin{theorem}\label{reduction_theorem}
The reduction $\widetilde{\Pi}(\mathcal{EL})$  of the submanifold of Euler-Lagrange fields of the theory is an isotropic submanifold of the reduced phase space $\mathcal{R}$ of the theory.
\end{theorem}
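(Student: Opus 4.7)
The plan is to reduce the assertion to the vanishing of $\Pi^*\omega_{\partial M}$ on $\mathcal{EL}$, and then to derive that vanishing from the fundamental formula (2.16) by a $\d^2=0$ argument. Write $\widetilde{\Pi}\colon\mathcal{F}_{P(E)}\to\mathcal{M}$ for the extended boundary projection $(\Phi,P)\mapsto(\varphi,p,\beta)$, so that $\varrho\circ\widetilde{\Pi}=\Pi$. I first claim that $\widetilde{\Pi}(\mathcal{EL})\subset\mathcal{M}_\infty$: any $\chi\in\mathcal{EL}$ restricts over the collar $U_\epsilon$ to a curve in $\mathcal{M}$ which, by Theorem \ref{presymplectic_equation}, is an integral curve of the presymplectic system $(\mathcal{M},\Omega,\mathcal{H})$, so its endpoint at $\partial M$ satisfies not only the primary constraint (3.9) but every consistency condition produced by the PCA. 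Since $\pi_\infty^*\omega_\infty=\Omega_\infty=i_\infty^*\varrho^*\omega_{\partial M}$, isotropy of $\widetilde{\Pi}(\mathcal{EL})$ in $(\mathcal{R},\omega_\infty)$ is equivalent to the vanishing of $(\Pi\circ j)^*\omega_{\partial M}$, where $j\colon\mathcal{EL}\hookrightarrow\mathcal{F}_{P(E)}$ denotes the inclusion.

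\textbf{The $\d^2=0$ computation.} Apply the exterior differential to the fundamental identity $\d S=\mathrm{EL}+\Pi^*\alpha$. Using $\d\alpha=-\omega_{\partial M}$ and $\d^2 S=0$ I obtain
\[
\d\,\mathrm{EL}=\Pi^*\omega_{\partial M}
\]
as 2-forms on $\mathcal{F}_{P(E)}$. Because $\mathrm{EL}$ vanishes pointwise on $\mathcal{EL}$, the pulled-back 1-form $j^*\mathrm{EL}$ is identically zero, so $j^*\d\,\mathrm{EL}=\d(j^*\mathrm{EL})=0$, which gives $(\Pi\circ j)^*\omega_{\partial M}=0$. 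This already shows that $\Pi(\mathcal{EL})$ is an isotropic submanifold of the boundary phase space $(T^*\mathcal{F}_{\partial M},\omega_{\partial M})$.

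\textbf{Descent and main difficulty.} To complete the proof I descend this isotropy through the reduction $\pi_\infty\colon\mathcal{M}_\infty\to\mathcal{R}$. Factoring $\widetilde{\Pi}\circ j$ through $i_\infty$ as $\hat\Pi_\infty\colon\mathcal{EL}\to\mathcal{M}_\infty$ and using $\varrho\circ\widetilde{\Pi}=\Pi$ together with the previous paragraph, I get $\hat\Pi_\infty^*\Omega_\infty=(\widetilde{\Pi}\circ j)^*\varrho^*\omega_{\partial M}=(\Pi\circ j)^*\omega_{\partial M}=0$, hence $(\pi_\infty\circ\hat\Pi_\infty)^*\omega_\infty=0$. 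Assuming, as stated in the regularity hypothesis on $S$, that $\widetilde{\Pi}(\mathcal{EL})\subset\mathcal{R}$ is a smooth submanifold and that $\pi_\infty\circ\hat\Pi_\infty$ is a submersion onto it, this vanishing descends to $\omega_\infty|_{\widetilde{\Pi}(\mathcal{EL})}=0$, proving isotropy. The delicate point I expect to encounter is precisely this last descent: in the infinite-dimensional setting one must separately justify that the map $\mathcal{EL}\to\widetilde{\Pi}(\mathcal{EL})$ has surjective differential (so that pullback detects the zero form on the image), and that the quotient $\mathcal{R}$ is genuinely a smooth manifold with $\pi_\infty$ a submersion. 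These are structural hypotheses rather than further computations, and they are the real content hidden behind the otherwise completely formal $\d^2=0$ trick.
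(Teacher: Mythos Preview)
Your proof is correct and follows the same skeleton as the paper's: establish that the boundary image of $\mathcal{EL}$ lands in $\mathcal{M}_\infty$ via the collar evolution, show isotropy upstairs, then descend through $\pi_\infty^*\omega_\infty=\Omega_\infty$. The one substantive addition is that you actually \emph{prove} the isotropy of $\Pi(\mathcal{EL})$ in $(T^*\mathcal{F}_{\partial M},\omega_{\partial M})$ via the $\d^2=0$ argument applied to the fundamental formula $\d S=\mathrm{EL}+\Pi^*\alpha$, whereas the paper's proof simply ends its chain of equalities with $\varrho^*\d\alpha\mid_{\Pi(\mathcal{EL})}=0$ without saying why; your explicit flagging of the smoothness and submersion hypotheses needed for the descent is likewise an improvement over the paper, which leaves these implicit.
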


\begin{proof}      
Recall that $\Pi(\mathcal{EL}) \subset \mathcal{C}$. It is clear that $\Pi(\mathcal{EL}) \subset \Pi(\mathcal{EL}_\epsilon) \subset \mathcal{M}_{\infty}$ where $\mathcal{EL}_\epsilon = \mathcal{EL}_{U_\epsilon}$ are 
the critical points of the action $S_\epsilon$, i.e., solutions of the Euler-Lagrange equations of the theory on $U_\epsilon$.

 The reduction $\widetilde{\Pi}(\mathcal{EL}) = \Pi(\mathcal{EL})/ (\mathcal{K}_\infty\cap T\, \Pi (\mathcal{EL}))$ of the isotropic submanifold  $ \Pi(\mathcal{EL})$ to the reduced phase space $\mathcal{R} = \mathcal{M}_\infty / \mathcal{K}_\infty$ is isotropic because $\pi_\infty^*\omega_\infty = \Omega_\infty$, hence $\pi_\infty^* (\omega_\infty\mid_{\widetilde{\Pi}(\mathcal{EL})}) = (\pi_\infty^* \omega_\infty)\mid_{\Pi(\mathcal{EL})} = \varrho^*\mathrm{d} \alpha \mid_{\Pi(\mathcal{EL})} = 0$. 
\end{proof}

%%%%%%%%%%%%%%%%%%%%%
%%%%%%%%%%%%%%%%%%%%%

\subsection{The limit $\lambda \to 0$ of Yang-Mills theories}
Recall equations $(2.5)$ and $(2.6)$ for the action of Yang-Mills theories in a first-order Hamiltonian formulation of the theory:

\begin{equation}\label{ymap}
S_{\mathrm{YM},{\lambda}}(A,P) = \int_M P_a^{\mu\nu} dA_\mu^a \wedge \d x^{m-1}_\nu - H_{\lambda}(A,P) \mathrm{vol}_M \, .
\end{equation}
with Hamiltonian function,
\begin{equation}\label{hamiltonian}
H_{\lambda}(A,P) = \frac{1}{2} \epsilon_{bc}^aP^{\mu\nu}_a A_\mu^bA_\nu^c + \frac{\lambda}{4}P^{\mu\nu}_a P_{\mu\nu}^a \, 
\end{equation}
for some ${\lambda} \geq 0$,
where the indexes $\mu\nu$  ($a$) in $P_a^{\mu\nu}$ have been lowered (raised) with the aid of the Lorentzian metric $\eta$ (the Killing-Cartan form on $\mathfrak{g}$, respect.). \\

 Plugging $(2.19)$ into $(2.18)$ and expanding the right hand side of $(2.18)$, we obtain,
 \begin{equation}\label{ymP}
S_{\mathrm{YM},{\lambda}}(A,P) = -\int_M \frac{1}{2} \left[ P^{\mu\nu}_a (\partial_\mu A_\nu^a - \partial_\nu A_\mu^a + \epsilon_{bc}^a A_\mu^b A_\nu^c) +  \frac{{\lambda}}{2}P^{\mu\nu}_aP^a_{\mu\nu} \right] \, \mathrm{vol}_M \, .
\end{equation}
Using that the curvature,
\begin{eqnarray}\label{Fmunu}
F_A &=& d_A A = dA + \frac{1}{2}[A\wedge A] = F_{\mu\nu} \d x^\mu \wedge \d x^\nu \\
&=& \frac{1}{2}\left( \partial_\mu A_\nu^a - \partial_\nu A_\mu^a + \epsilon_{bc}^a A_\mu^b A_\nu^c\right) \d x^\mu \wedge \d x^\nu \otimes \xi_a\,  \nonumber
\end{eqnarray}
we can rewrite eqn $(2.20)$ as
$$
S_{\mathrm{YM},{\lambda}}(A,P) = -\int_M \left[ P_a^{\mu\nu} F_{\mu\nu}^a + \frac{\lambda}{4}P^{\mu\nu}_aP^a_{\mu\nu} \right] \, \mathrm{vol}_M \, . 
$$

This last expression is the action of the Yang-Mills theory for any given $\lambda \geq 0$.
If we take its limit 
  $\lambda \to 0$, we obtain,
\begin{equation}\label{YM,0action}
S_{YM,0} (A,P) = \int_M P^{\mu\nu}_a F_{\mu\nu}^a \mathrm{vol}_M \,
\, ,
\end{equation}
whose equations of motion are given by,
$$
F_A = 0\, , \qquad d_A^*P = 0 \, .
$$
Thus the moduli space of solutions of the Euler-Lagrange equations is given by,
$$
\mathcal{M} = \{ F_A = 0, d_A^*P = 0 \} /\mathcal{G}_M \, ,
$$
where $\mathcal{G}_M$ denotes the group of gauge transformations of the theory.

%%%%%%%%%%%%%%%%%%%%%%%%%%

\section{Palatini's Gravity}

\subsection{Palatini's Yang-Mills}

The primary fields of a theory of gravity a la Palatini will be given by principal connections $A$
on a $G$-principal bundle over a smooth manifold $M$ with $G$ the Lorentz group $O(1,d)$, the group of isometries preserving the non-degenerate quadratic form $Q$ of signature $-+\cdots +$, with $m = 1+d = \dim M$.

The connections $A$ can be considered as vertical equivariant 1-forms on a principal fiber bundle with structural group $O(1,d)$. The choice of the principal fiber bundle $P \to M$ determines a sector of a full theory of gravity where, in addition to the bundle $P$, we should consider all equivalence classes of principal $O(1,d)$ bundles over $M$.    If we fix a topology on $M$, the corresponding family of classes of principal fiber bundles are in one-to-one correspondence with homotopy classes of maps $f \colon M \to B_{O(1.d)}$, where $B_{O(1.d)}$ is the universal classifying space of the Lorentz group and the principal fiber bundle corresponding to the map $f$ is given by $P_f = f^*E_{O(1,d)}$, where $E_{O(1.d)} \to B_{O(1.d)}$ is the universal principal $O(1,d)$ bundle.  Thus the fields corresponding to each equivalence class will define connected components in the space of all fields and we will focus on one of them.

\subsection{Palatini's constraint}

Palatini's constraint determines a subbundle of the covariant phase space whose sections define a submanifold of the space of fields $J^1\mathcal{F}^*$ such that the restriction of the topological sector of $SO(1,3)$--Yang-Mills is equivalent to Palatini's action.   

Consider the bundle $F= GL(\tau_m,TM) \subset \mathrm{Hom}(\tau_m,TM) \cong \tau_m^* \otimes TM$ over $M$ whose fiber at $x\in M$ consists on invertible linear maps $e(x)$ from $\tau_m(x)$ to $T_xM$ and where $\tau_m = M\times \mathbb{M}^m$ is the trivial bundle over $M$ with fiber the $m$-dimensional Minkowski space $\mathbb{M}^m$ with metric $\eta = \mathrm{diag}(-, + \cdots , +)$.  Notice that local cross sections of the bundle $F$ can be thought as local frames on $M$, i.e., if $U$ is an open set on $M$ such that $TM\mid_U \cong U \times \mathbb{R}^m$, then a cross section $e \colon U \to \tau_m^* \otimes TM$, defines a map $e_x := e(x) \colon \mathbb{R}^m \to T_xM$ for each $x \in U$, i.e.,  a family of linearly independent vectors $e_I (x)$, $I = 0,1,\ldots, d$, which are the images under $e_x$ of the standard orthogonal basis $u_i$ on $\mathbb{M}^m$, that is $\eta(u_0,u_0) = -1$, $\eta (u_k, u_k) = 1$, $k = 1, \ldots, d$.   With an slight abuse of notation we will denote $e_x(u_I) = e_I(x)$.    Global cross sections $e$ are usually called vierbeins for an arbitrary dimension $m$, or tetrad fields if $m = 4$.   In what follows we will not assume that there are globally defined sections of $F$ (that it may not exist).    Notice that given a local cross section $e$ it defines a Lorentz metric on $U$ by means of $g_x (u,v) = \eta_x(e^{-1}(u), e^{-1}(v))$ for any $u,v\in T_xU$.   The metric $g$ is Lorentz because clearly the vectors $e_I(x)$ determine an orhonormal basis for $g$ at $T_xM$ such that $g_x(e_I(x), e_J(x))$ is diagonal with diagonal $(-,+\ldots,+)$. 

Choosing local coordinates $x^\mu$ on $U$ we will have that $e_I = e_I^\mu (x) \partial /\partial x^\mu$ will defined a local vector field on $U$ for each $I$.   With this notation we may also write the local cross section $e$ as $e = e_I \otimes u^I = e_I^\mu (x) \partial/\partial x^\mu \otimes u^I$ where $u^I$ denotes the canonical dual basis of the standard orthogonal basis $u_I$.  

Let us recall that we have a distinguished volume form $\mathrm{vol}_M$ on $M$, i.e., a global section of the determinant bundle $\mathrm{det}(M) = \Lambda^m(TM)$.  Morevoer there is a canonical section of the bundle $\mathrm{det}(\tau_m) = \Lambda^m(\tau_m)$ given by $\mathrm{vol}_{\eta} = u^0 \wedge u^1 \wedge \cdots \wedge u^d$. Then a linear map $e_x\colon \tau_m(x) \to T_xM$ defines a pull-back $e^*(\mathrm{vol}_M) = \epsilon \mathrm{vol}_{\eta}$, in other words, $\epsilon (x)$ is the determinant of the map $e_x$.  In local coordinates:
$$
\epsilon (x) = \det (e_I^\mu(x)) \, .
$$

Consider the map $P \colon F \to P(E)$ defined as:
$$
P(e) = \epsilon e\wedge e
$$
where $e\wedge e$ is defined as the linear map from $\tau_m\wedge \tau_m$ to $T_xM \wedge T_xM$ given by $e\wedge e(u\wedge v) = e(u) \wedge e(v)$.    Using the previous notation we may write:
$$
P(e) = \epsilon e_I^\mu e_J^\nu \frac{\partial}{\partial x^\mu}\wedge \frac{\partial}{\partial x^\nu} \otimes u^I \wedge u^J \, .
$$
Notice that if we write the tensor $P(e)$ in the local basis $\frac{\partial}{\partial x^\mu}\wedge \frac{\partial}{\partial x^\nu} \otimes u^I \wedge u^J$ as:
$$
P (e) = P^{\mu\nu}_{IJ} \frac{\partial}{\partial x^\mu}\wedge \frac{\partial}{\partial x^\nu} \otimes u^I \wedge u^J \, ,
$$
then 
$$
P_{IJ}^{\mu\nu} = \det(e_I^\mu) \, e_{[I}^{[\mu} e_{J]}^{\nu]} \ ,
$$
with 
$P^{\mu\nu}_{IJ} = - P^{\nu\mu}_{IJ} = - P^{\mu\nu}_{JI} = P^{\nu\mu}_{JI}$.  We will sometimes use the notation 
$P^{\mu\nu}_{IJ} = \det(e_I^\mu) \, e_{I}^{\mu} \wedge e_{J}^{\nu}$ to indicate the skew symmetry in the pairs of indices $IJ$ and $\mu\nu$.

Finally notice that $P(e)$ actually lies in $P(E)$ as the fiber of $P(E)$ at $x$ is given by $T_xM\wedge T_xM\otimes \mathfrak{so}(1,d)$ and $\tau_m\wedge \tau_m \subset \mathfrak{so}(1,d)$.  

The image of $F$ under the map $P$ will be called the Palatini subbundle of $P(E)$ and will be denoted simply by $P(F) \subset P(E)$.   Double sections of this bundle are the fields of the theory we are interested in. Such space of sections will be denoted as $\mathcal{P} \subset J^1\mathcal{F}_M^*$.  Notice that a double section $(A, P)$ of $\mathcal{P}$ is a section of $P(E)$ such that locally there exists $e$ such that $P = \epsilon \, e\wedge e$.

Hence the space of fields of the theory we are constructing can be considered as a submanifold of the space of fields $J^1\mathcal{F}_M^*$ defined by the range of the map $P$.

%%%%%%%%%%%%%%%%%%%%%%%%%%%
%%%%%%%%%%%%%%%%%%%%%%%%%%%

\subsection{The action}

The action of the topological phase of Yang-Mills given by Eq. $(2.22)$ is given by:
$$
S_{YM,0} = \int_M P_{IJ}^{\mu\nu} F_{\mu\nu}^{IJ} \mathrm{vol}_M \, ,
$$
with $(A,P)\in J^1\mathcal{F}_M^*$, then if we restrict $(A,P)$ to $\mathcal{P}$, the
action becomes:
$$
S_{YM,0}\mid_{\mathcal{P}} = \int_M \epsilon \, e_I^\mu e_J^\nu F_{\mu\nu}^{IJ} \mathrm{vol}_M \, ,
$$ 
which is exactly Palatini's action for gravity.

The Euler-Lagrange equations of the theory can be obtained by standard methods by computing the differential of $S_{YM,0}$ restricted to $\mathcal{P}$ or, alternatively, using an appropriate version of Lagrange's multipliers theorem to obtain the critical points of $S_{YM,0}$ restricted to $\mathcal{P}$.   We will develop this point of view in the following section.

%%%%%%%%%%%%%%%%%%%%%%%%%%%

\subsection{Critical points and Euler-Lagrange equations}

\subsubsection{Lagrange's multipliers theorem}

We will discuss first the version of Lagrange's multipliers theorem suited to the problem at hand.

Theorem:
 Let $\mathcal{M}$ be an affine manifold and let  $F\colon \mathcal{M} \to \mathbb{R}$ be a differentiable function.  Let $\mathcal{D}$ be a smooth manifold and let $\Phi \colon \mathcal{D} \to \mathcal{M}$ be a smooth injective function. Let $\mathcal{N} = \{ x \in \mathcal{M} \mid \exists e \in \mathcal{D}\, , x = \Phi (e) \}$.
 
   $x\in \mathcal{N}$ is a critical point of $F\mid_{\mathcal{N}} \colon \mathcal{N} \to \mathbb{R}$  iff there exists $e \in \mathcal{D}$ and $\lambda \in \mathcal{M^*}$ such that $(x,\lambda,e)$ is a critical point of the extended function $\mathbb{F} \colon \mathcal{M} \times \mathcal{M}^* \times \mathcal{D} \to \mathbb{R}$ given by:
$$
\mathbb{F}(x,\lambda, e) = F(x) + \langle \lambda, x - \Phi (e)\rangle \ .
$$
Proof:

 Suppose $x \in \mathcal{N}$ is a critical point of $F\mid_{\mathcal{N}}$, i.e. $\mathrm{d}(F\mid_{\mathcal{N}})_x (\delta x) = 0$ for all $\delta x \in T_x\mathcal{N}$, or $\mathrm{d}F_x \in {T_x\mathcal{N}^0}^2.$ Since $\Phi \colon \mathcal{D} \to \mathcal{N}$ is bijective, there exists $e\in \mathcal{D}$ such that $\Phi(e)=x$ and for given $\delta e \in T_e\mathcal{D}$ there exists $\delta x \in T_{\Phi(e)}\mathcal{N}$ such that $\Phi_*(e)(\delta e)= \delta x$, where $\Phi_*(e) \colon T_e\mathcal{D}\to T_{\Phi(e)}\mathcal{N}$ denotes the tangent map to $\Phi$ at $e\in \mathcal{D}$. It therefore follows that since $\mathrm{d}(F \mid_{\mathcal{N}})_x(\delta x)=0$ for all $\delta x \in T_x\mathcal{N}$, $(\mathrm{d}F)( \Phi_*(e)(\delta e))=0$ for any $\delta e \in T_e\mathcal{D}$.
  
  Computing the differential of $\mathbb{F}$, we obtain,
\begin{equation}\label{dFbold}
\mathrm{d}\mathbb{F}_{(x,\lambda,e)} (\delta x, \delta \lambda, \delta e) = dF_x(\delta x) + \langle \delta\lambda, x - \Phi (e)\rangle + \langle \lambda, \delta x - \Phi_*(e) (\delta e) \rangle \, ,
\end{equation}
where  $\delta e \in T_e\mathcal{D}$, $\delta x \in T_x\mathcal{M} \cong \mathcal{M}$, $\delta \lambda \in T_\lambda \mathcal{M}^*\cong T_\lambda^*\mathcal{M} \cong \mathcal{M}^*$.  The notation $\langle \lambda , x\rangle$ denotes the natural pairing between $\mathcal{M}$ and its dual space $\mathcal{M}^*$.

 For $(x,\lambda,e)$ such that $x\in\mathcal{N}$ is a critical point of $\mathcal{F}\mid_{\mathcal{N}}$, $\Phi(e)=x$ and $\lambda = -\mathrm{d}F_x\in T_x\mathcal{N}^0 \subset T_x^*\mathcal{M}\cong\mathcal{M^*}$, it follows from $(3.1)$ and from the prior statements that $\mathrm{d}\mathbb{F}_{(x,\lambda,e)}(\delta x,\delta \lambda, \lambda e)= 0$ for all $\delta x,\delta\lambda$ and $\delta e$. Thus $(x,\lambda,e)$ is a critical point of $\mathbb{F}$.

 Now we prove the other direction of the theorem.
Let $(x,\lambda, e)$ be a critical point of $\mathbb{F}$, i.e. $\mathrm{d}\mathbb{F}_{(x,\lambda,e)}(\delta x,\delta \lambda, \delta e)=0$ for all $\delta x, \delta \lambda,\delta e.$ In particular, fixing $\delta x = \delta e =0$, for any $\delta\lambda$, since $(x,\lambda,e)$ is a critical point of $\mathbb{F}$, $\mathrm{d}\mathbb{F}_{(x,\lambda,e)}(0,\delta\lambda,0)= 0$. This implies by $(3.1)$ that $x=\Phi(e)$, thus $x\in \mathcal{N}.$ For any $\delta x\in T_x\mathcal{N}$, since $x=\Phi(e)$ and since $\Phi\colon \mathcal{D}\to\mathcal{N}$ is bijective, there exists $\delta e\in\mathcal{D}$ such that $\Phi_*(e)(\delta e)= \delta x$. So for our critical point $(x,\lambda,e)$ of $\mathbb{F}$ and for any $\delta x\in T_x\mathcal{N}$, applying $(3.1)$, we obtain $ 0 = \mathrm{d}\mathbb{F}_{(x,\lambda,e)}(\delta x,\delta \lambda,\delta e)= \mathrm{d}F_x(\delta x)$. Thus for any $\delta x \in T_x\mathcal{N},  \mathrm{d}F_x(\delta x)=0$, i.e. $x$ is a critical point of $F$.

%%%%%%%%%%%%%%%%%%%%%%%%%%%

\subsubsection{Critical points}

We apply Lagrange's multipliers theorem discussed in the previous section to the following setting.   The affine manifold $\mathcal{M}$ is the space of fields $J^1 \mathcal{F}_M^*$ in the covariant phase space .   The manifold $\mathcal{D}$ is the manifold of vierbein fields, i.e, sections $e$ of the bundle $F$ discussed before.
The map submanifold $\mathcal{N}$ is the submanifold $\mathcal{P}$ defined by Palatini's constraints, i.e., we have the map $P \colon \mathcal{D} \to J^1\mathcal{F}^*$ given by $P(e) = \epsilon \, e\wedge e$.   Then, finally, the function $F\colon \mathcal{M} \to \mathbb{R}$ is the topological Yang-Mills action functional $S_{YM,0} \colon J^1\mathcal{F}_M^*\to \mathbb{R}$.

Then we conclude that critical points of Palatini's action $S_P$ are in correspondence with families of critical points of the extended action:
$$
\mathbb{S} (A,P,\Lambda, e) = S_{YM,0}(A,P) + \langle \Lambda, P - \epsilon \, e\wedge e \rangle \, ,
$$
or, more explicitly:
\begin{equation}\label{paction}
\mathbb{S} (A,P,\Lambda, e) = \int_M P^{\mu\nu}_{IJ} F_{\mu\nu}^{IJ} + \Lambda_{\mu\nu}^{IJ}\left( P_{\mu\nu}^{IJ} - \epsilon\, e_\mu^I e_\nu^J \right) \mathrm{vol}_M \, .
\end{equation}

According to Lagrange's multipliers theorem, the critical points of $\mathbb{S}$ have the form $(A,P, \Lambda, e)$ where $(A,P)$ is a critical point of $S_{YM,0}\mid_{\mathcal{P}} = S_P$, for all $\Lambda$, i.e., $P = \epsilon \, e\wedge e$ for some vierbein field $e$ and $(A,e)$ is a critical point of:
$$
S_P = \int_M e_I^\mu e_\nu^J F_{\mu\nu}^{IJ} \epsilon \,  \mathrm{vol}_M \, .
$$
Then standard arguments (se for instance \cite{Ba**}, \cite{Gi**}) show that the Palatini connection is torsionless
and metric with respect to the metric $g_e$ defined by the vierbein field, that is $A$ is the Levi-Civita connection of the metric $g_e$.  Moreover, it satisfies Ricci's equation:
$$
\mathrm{Ric}(A) = 0 \, .
$$

From Eq. (\ref{dFbold}) we also get that if $(x,\lambda, e)$ is a critical point of $\mathbb{F}$, then at $x\in \mathcal{N}$ we get:
$$
\mathrm{d}F_x (\delta x) = - \langle \lambda, \delta x - \Phi_*(e) \delta e \rangle \, ,
$$
and $\delta x$ an arbitrary vector in $T_x\mathcal{M}$, that is not necessarily in $T_x\mathcal{N}$.  
This shows that if $(A,P = P(e), \Lambda, e)$ is a critical point of $\mathbb{S}$, then
$$
\mathrm{d}S_P (A,P = P(e))  (\delta A, \delta P) = - \langle \Lambda, \delta P - P_*(e) \delta e \rangle.
$$

%%%%%%%%%%%%%%%%%%%%%%%%%%%%%%%%%%%%%%%%%%%%%%%%%%%
%%%%%%%%%%%%%%%%%%%%%%%%%%%%%%%%%%%%%%%%%%%%%%%%%%%

\subsection{The canonical formalism near the boundary}\label{sec:canonical}
In order to obtain an evolution description for Palatini Gravity and to prepare the ground for canonical quantization, we need to introduce a local time parameter. We will only assume that a collar $U_{\epsilon} = (-\epsilon, 0]\times \partial M$ around the boundary can be chosen and so that a choice of a time parameter $t = x^0$ can be made near the boundary that would be used to describe the evolution of the system.  The fields of the theory would then be considered as fields defined on a given spatial frame that evolve in time for $t \in (-\epsilon, 0]$.  

The dynamics of such fields would be determined by the restriction of the Palatini action \eqref{paction} to the space of fields 
on $U_{\epsilon}$. Expanding we obtain,
\begin{eqnarray*}
S(A,P,\Lambda,e)
 & = & \int_{U_\epsilon}[P_{IJ}^{\mu\nu}F_{\mu\nu}^{IJ} + \Lambda_{\mu\nu}^{IJ}(P_{\mu\nu}^{IJ} -{\epsilon}e_{\mu}^Ie_{\nu}^J)]vol_M \\
& = & \int_{U_{\epsilon}}[ P_{IJ}^{\mu\nu}(-\frac{1}{2})(\partial_{\mu}A_{\nu}^{IJ}-\partial_{\nu}A_{\mu}^{IJ} \\
& + & \epsilon_{KL,MN}^{IJ}A_{\mu}^{KL}A_{\nu}^{MN})+ \Lambda_{\mu\nu}^{IJ}(P_{\mu\nu}^{IJ}- \epsilon e_{\mu}^Ie_{\nu}^J)]vol_M \\
& = & \int_{-\epsilon}^0 dt \int_{\partial M}vol_{\partial M}[P_{IJ}^{k0}(\partial_0 A_k^{IJ}- \partial_kA_0^{IJ} + \epsilon_{KL,MN}^{IJ}A_0^{KL}A_k^{MN})\\
& &-\frac{1}{2} P_{IJ}^{kj}(\partial_kA_j^{IJ}-\partial_jA_k^{IJ} + \epsilon_{KL,MN}^{IJ}A_k^{KL}A_j^{MN})
+ 2{\Lambda_{k0}^{IJ}}(P_{k0}^{IJ}-\epsilon e_k^Ie_0^J)+ \Lambda_{kj}^{IJ}(P_{kj}^{IJ}-\epsilon e_k^Ie_j^J)].
\end{eqnarray*}

In the previous expressions $\epsilon_{bc}^a$ denote the structure constants of the Lie algebra $\mathfrak{g}$ with respect to the basis $\xi_a$, that is $[\xi_b, \xi_c] = \epsilon_{bc}^a \xi_a$.
Notice that  $\epsilon^a_{bc}A^b_0A^c_0=0$ because for fixed a, ${\epsilon}^a_{bc}$ is skew-symmetric.  Moreover the indexes $\mu$ and $a$ have been pushed down and up by using the metric $\eta$ and the Killing-Cartan form $\langle\cdot, \cdot \rangle$ respectively.

 In the last equation we used that $P$ is a bivector, i.e., $P_a^{\mu\nu}$ is skew symmetric in $\mu$ and $\nu$ and therefore $P_a^{00} = 0$, and also $P^{k0}_a P^a_{k0} = P^{0i}_a P^a_{0i}$, because $P^{k0} = - P^{0k}$, etc. The momenta fields are defined as sections of the bundle $P(E)$ and as such are unrestricted. However, because Yang-Mills theories are Lagrangian theories the Legendre transform selects a subspace of the space of momenta that corresponds to fields $P$, skew symmetric in the indices $\mu$, $\nu$.(For more details see [Ib15].)

The previous expression acquires a clearer structure by introducing the appropriate notations for the fields restricted at the boundary and assuming that they evolve in time $t$.
Thus the pull-backs of the components of the fields $A$ and $P$ to the boundary will be denoted respectively as,
\begin{eqnarray*}
a^a_k &:=& A^a_k\mid_{\partial M} ; \qquad  a = (a^a_k) \, , \qquad a^a_0 := A^a_0\mid_{\partial M} ; \qquad a_0 = (a^k_0) \, , \\
p^k_a &:=& P^{k0}_a \mid_{\partial M} ; \qquad p = (p^k_a)\, , \qquad p^0_a := P^{00}_a\mid_{\partial M}= 0 ; \qquad p_0 = (p^0_a)=0 \, , \\
\beta^{ki}_a &:=& P^{ki}_a\mid_{\partial M} ; \qquad \beta =( \beta^{ki}_a) \, .
\end{eqnarray*}
Given two fields at the boundary, for instance $p$ and $a$, we will denote as usual by $\langle p, a\rangle$ the expression, 
$$
\langle p, a\rangle = \int_{\partial M} p_a^\mu a_\mu^a \, \mathrm{vol}_{\partial M}\, ,
$$
and the contraction of the inner (Lie algebra) indices by using the Killing-Cartan form and the integration over the boundary is understood. 

Introducing the notations and observations above in the expression for $S_{ U_\epsilon}$ we obtain,
\begin{eqnarray*}
S_{U_{\epsilon}}(A,P,\Lambda,e)&=& \int_{-\epsilon}^0dt\int_{\partial M} vol_{\partial M} [p_{IJ}^k(\partial_0a_k^{IJ}-\partial_ka_0^{IJ} + \epsilon_{KL,MN}^{IJ}a_0^{KL}a_k^{MN})\\
& & -\frac{1}{2} \beta_{IJ}^{kj}(\partial_ka_j^{IJ}-\partial_ja_k^{IJ}+ \epsilon_{KL,MN}^{IJ}a_k^{KL}a_j^{MN})
+ 2 \Lambda_{k0}^{IJ}(p_k^{IJ}-\epsilon e_k^Ie_0^J) + \Lambda_{kj}^{IJ}(\beta_{kj}^{IJ}-\epsilon e_k^Ie_j^J)].\\
&=& \int^0_{- \epsilon} \d t \, \mathcal{L}(a,\dot{a},a_0,\dot{a}_0,p,\dot{p}, \beta,\dot{\beta},\Lambda,\dot{\Lambda},\Lambda_0,\dot{\Lambda_0},e,\dot{e}) \,
 \end{eqnarray*}

where

$$\mathcal{L}(a,\dot{a},a_0,\dot{a_0},p,\dot{p},\beta,\dot{\beta},\Lambda,\dot{\Lambda},\Lambda_0,\dot{\Lambda_0},e,\dot{e})= <p,\dot{a}-d_aa_0 + 2\Lambda_{0}>$$
$$ - <\beta,F_a-\Lambda>+
<\Lambda_0,-2\epsilon e\wedge e_0>
 + <\Lambda, -\epsilon e\wedge e>. $$

Euler-Lagrange equations will have the form:
$$
\frac{d}{\d t} \frac{\delta \mathcal{L}}{\delta \dot{\chi}} = \frac{\delta \mathcal{L}}{\delta \chi} \, ,
$$
where $\chi \in P(E)$ and $\delta /\delta \chi$ denotes the variational derivative of the functional $\mathcal{L}$. \\\\

Thus for $\chi = p$ we obtain,
$$
\frac{\delta \mathcal{L}}{\delta \dot{p}}=0, \quad \mathrm{hence} \quad 0 = \frac{\delta \mathcal{L}}{\delta p} = \dot{a} - \d_a a_0 + 2\Lambda_0 \, ,
$$
and thus,
 \begin{equation}\label{adot}
 \dot{a} =  \d_a a_0  -  2\Lambda_0 \, .
 \end{equation}  \\
 
For $\chi = a$ we obtain,
$$
\frac{\delta \mathcal{L}}{\delta \dot{a}}=p, \quad \mathrm{hence} \quad \dot{p} =\frac{\delta \mathcal{L}}{\delta a} = d^*\beta + [p,a_0]  \, ,
$$
that is,
 \begin{equation}\label{pdot}
 \dot{p} =  d^*\beta + [p,a_0]. 
 \end{equation}  \\  
 
For $\chi = a_0$ we obtain,
 $$
 \frac{\delta \mathcal{L}}{\delta \dot{a_0}}=0, \quad \mathrm{hence} \quad 0 = \frac{\delta \mathcal{L}}{\delta a_0}= \frac{\partial}{\partial a_0}<p,d_aa_0> = \frac{\partial}{\partial a_0}-<d^*_ap,a_0> = -d_a^*p \, ,
 $$
that is,
 \begin{equation}\label{constraint1}
 d_a^*p = 0 .
 \end{equation} \\
 
For $\chi = \beta $ we obtain,
 $$
 \frac{\delta \mathcal{L}}{\delta \dot{\beta}}=0, \quad \mathrm{hence} \quad 0= \frac{\delta \mathcal{L}}{\delta \beta}= -F_a + \Lambda, \,
 $$
that is,
\begin{equation}\label{constraint2}
F_a = \Lambda .
\end{equation} \\

For $\chi = \Lambda $ we obtain,
 $$
 \frac{\delta \mathcal{L}}{\delta \dot{\Lambda}}=0, \quad \mathrm{hence} \quad 0 = \frac{\delta \mathcal{L}}{\delta \Lambda}= \beta -\epsilon e\wedge e, \,
 $$
 that is,
 \begin{equation}\label{constraint3}
 \beta = \epsilon e \wedge e .
 \end{equation} \\
 
For $\chi = \Lambda_0 $ we obtain,
 $$
 \frac{\delta \mathcal{L}}{\delta \dot{\Lambda_0}}=0, \quad \mathrm{hence} \quad 0 = \frac{\delta \mathcal{L}}{\delta \Lambda_0}= 2p - 2\epsilon e \wedge e_0, \, 
 $$
 that is,
 \begin{equation}\label{constraint4}
 \mathrm{p} = \epsilon e \wedge e_0 .
 \end{equation}
For $\chi = e $ we obtain,
 $$
\frac{\delta \mathcal{L}}{\delta \dot{e}}=0, \quad \mathrm{hence} \quad 0 = \frac{\delta \mathcal{L}}{\delta e}= -2\epsilon e_0 \Lambda_0 - 2 \epsilon e \Lambda, \,
 $$
 that is,
 \begin{equation}\label{constraint5}
 - e_0 \Lambda_0 = e \Lambda .
 \end{equation}
For $\chi = e_0 $ we obtain,
 $$
 \frac{\delta \mathcal{L}}{\delta \dot{e_0}}=0, \quad \mathrm{hence} \quad 0 = \frac{\delta \mathcal{L}}{\delta e_0}= 2\epsilon e\Lambda_0
 $$
 i.e.,
 \begin{equation}\label{constraint6}
 e\Lambda_0 = 0.
 \end{equation}
Thus solving for the Euler-Lagrange equations, we have obtained two evolution equations, \eqref{adot} and \eqref{pdot} and six constraint equations \eqref{constraint1} - \eqref{constraint6}.

\subsection{The presymplectic formalism: Palatini at the boundary and reduction}
  As discussed in general in section $3.2$, we define the extended Hamiltonian, $\mathcal{H}$, so that $\mathcal{L} = \langle p, \dot{a}\rangle - \mathcal{H}$ :
  
\begin{equation}\label{PH}
\mathcal{H}(a,a_0,p,\beta,\Lambda,\Lambda_0,e)= <p,-d_aa_0 + 2\Lambda_{0}>
 - <\beta,F_a-\Lambda>+
<\Lambda_0,-2\epsilon e\wedge e_0> \quad\\\\
\quad \quad + \langle\Lambda,-\epsilon e\wedge e \rangle.
\end{equation}

Thus the Euler-Lagrange equations can be rewritten as 
\begin{equation}\label{PMP1}
\dot{a} = \frac{\delta \mathcal{H}}{\delta p}; \quad \dot{p} = - \frac{\delta \mathcal{H}}{\delta a} \, ,
\end{equation}
\begin{equation}\label{Constraints}
\frac{\delta \mathcal{H}}{\delta a_0} = 0 ; \quad \frac{\delta \mathcal{H}}{\delta \beta} = 0 ; \quad
\frac{\delta \mathcal{H}}{\delta \Lambda} = 0 ; \quad
\frac{\delta \mathcal{H}}{\delta \Lambda_0} = 0 ; \quad
\frac{\delta \mathcal{H}}{\delta e} = 0 ; \quad
\frac{\delta \mathcal{H}}{\delta e_0}= 0.
\end{equation}

We denote again by $\varrho \colon \mathcal{M} \to T^*\mathcal{F}_{\partial M}$ the canonical projection $\varrho(a,a_0,p,\beta)=(a,a_0,p)$.  Let $\omega_{\partial M}$ denote the form on the cotangent bundle $T^*\mathcal{F}_{\partial M}$, 
$$
\omega_{\partial M} =  \delta a \wedge \delta p . 
$$
We will denote again by $\Omega$ the pull-back of this form to $\mathcal{M}$ along $\varrho$, i.e., $\Omega = \varrho^*\omega_{\partial M}$.  Clearly, $\ker {\Omega} = \mathrm{span} \{ \delta /\delta \beta, \delta/\delta a_0 \}$, and we have the particular form that Thm.  \ref{presymplectic_equation} takes here.

\begin{theorem}
The solution to the equation of motion defined by  the Palatini Lagrangian \eqref{paction}, 
are in one-to-one correspondence with the integral 
curves of the presymplectic system $(\mathcal{M},\Omega,\mathcal{H})$, i.e. with the integral curves of the vector field $\Gamma$ on $\mathcal{M}$ such that $i_\Gamma\Omega= \mathrm{d} \mathcal{H}$.
\end{theorem}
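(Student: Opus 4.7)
The plan is to mirror the proof of Theorem \ref{presymplectic_equation} verbatim, adapting the component count to the enlarged space of fields $\mathcal{M}$ for Palatini's theory. Since $\Omega = \varrho^*\omega_{\partial M} = \delta a\wedge \delta p$ only pairs the boundary connection $a$ with its conjugate momentum $p$, while $\mathcal{M}$ carries in addition the fields $a_0, \beta, \Lambda, \Lambda_0, e, e_0$, the kernel $\mathcal{K}=\ker\Omega$ is spanned by $\delta/\delta a_0$, $\delta/\delta\beta$, $\delta/\delta\Lambda$, $\delta/\delta\Lambda_0$, $\delta/\delta e$, $\delta/\delta e_0$. The evolution equations \eqref{PMP1} will arise from matching the $\delta a$ and $\delta p$ components of $i_\Gamma\Omega = \mathrm{d}\mathcal{H}$, and the six constraints \eqref{Constraints} will emerge as the requirement that $\mathrm{d}\mathcal{H}$ have no components along the kernel directions.

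Concretely, I would write an arbitrary tangent vector field along a section as
$$
\Gamma = A\tfrac{\delta}{\delta a} + B\tfrac{\delta}{\delta p} + C\tfrac{\delta}{\delta a_0} + D\tfrac{\delta}{\delta \beta} + E\tfrac{\delta}{\delta \Lambda} + F\tfrac{\delta}{\delta \Lambda_0} + G\tfrac{\delta}{\delta e} + K\tfrac{\delta}{\delta e_0},
$$
and contract against $\Omega$ to obtain $i_\Gamma\Omega = A\,\delta p - B\,\delta a$. Expanding the differential
$$
\mathrm{d}\mathcal{H} = \tfrac{\delta\mathcal{H}}{\delta a}\delta a + \tfrac{\delta\mathcal{H}}{\delta p}\delta p + \tfrac{\delta\mathcal{H}}{\delta a_0}\delta a_0 + \tfrac{\delta\mathcal{H}}{\delta \beta}\delta \beta + \tfrac{\delta\mathcal{H}}{\delta \Lambda}\delta \Lambda + \tfrac{\delta\mathcal{H}}{\delta \Lambda_0}\delta \Lambda_0 + \tfrac{\delta\mathcal{H}}{\delta e}\delta e + \tfrac{\delta\mathcal{H}}{\delta e_0}\delta e_0
$$
and identifying coefficients of the linearly independent variational basis one-forms yields exactly $A = \delta\mathcal{H}/\delta p$, $B = -\delta\mathcal{H}/\delta a$, together with the vanishing of $\delta\mathcal{H}/\delta a_0$, $\delta\mathcal{H}/\delta \beta$, $\delta\mathcal{H}/\delta \Lambda$, $\delta\mathcal{H}/\delta \Lambda_0$, $\delta\mathcal{H}/\delta e$, $\delta\mathcal{H}/\delta e_0$.

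Finally, an integral curve of such a $\Gamma$ has $\dot a = A$ and $\dot p = B$, so the first pair reproduces \eqref{PMP1}, while the remaining six equations reproduce the constraint system \eqref{Constraints}, which by the computations already carried out in Section \ref{sec:canonical} are precisely the Euler-Lagrange equations \eqref{adot}--\eqref{constraint6} for the Palatini Lagrangian. The converse is immediate: reading the same identities backwards, any curve solving \eqref{adot}--\eqref{constraint6} defines a $\Gamma$ whose components satisfy the pointwise matching above and therefore $i_\Gamma\Omega = \mathrm{d}\mathcal{H}$.

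The only genuine check — and the main place where care is needed — is the equivalence between the variational derivatives of $\mathcal{H}$ and the right-hand sides of \eqref{adot}--\eqref{constraint6}; this amounts to recomputing $\delta\mathcal{H}$ from \eqref{PH} and recognizing $d_a^*p$ via integration by parts, which is routine but must be tracked carefully since $a_0$, $\Lambda$ and the vierbein components appear in several summands simultaneously. No new analytic input is required: the statement is an instance of Theorem \ref{presymplectic_equation} applied to the particular $(\mathcal{M},\Omega,\mathcal{H})$ of \eqref{PH}.
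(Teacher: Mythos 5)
Your proposal is correct and is essentially the argument the paper intends: the paper states this theorem as ``the particular form that Thm.~\ref{presymplectic_equation} takes here'' and gives no separate proof, so writing out the general vector field, computing $i_\Gamma\Omega = A\,\delta p - B\,\delta a$, and matching coefficients against $\mathrm{d}\mathcal{H}$ to recover \eqref{PMP1} and \eqref{Constraints} is exactly the intended route. Your explicit listing of all six kernel directions $\delta/\delta a_0,\delta/\delta\beta,\delta/\delta\Lambda,\delta/\delta\Lambda_0,\delta/\delta e,\delta/\delta e_0$ is in fact more careful than the paper's own remark, which records only $\delta/\delta\beta$ and $\delta/\delta a_0$.
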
          

The primary constraint submanifold $\mathcal{M}_1$ is
 defined by the six constraint equations,
$$
\mathcal{M}_1= \{(a,a_0,p,\beta, \Lambda,e)|\mathcal{F}_a = \Lambda, d_a^*p = 0, \beta = \epsilon e\wedge e, p = \epsilon e \wedge e_0, e_0\Lambda_0 = e\Lambda, e \Lambda_0 = 0 \} \, .
$$
\\ 
Since $\Lambda = F_a$, and $\beta$ is a just a function of $e$,  we have that \\
 \\
 $ \mathcal {M}_1\cong \{(a,a_0,p,e)| d_a^*p = 0 , p= \epsilon e \wedge e_0, e_0 {F_a}_0 = e F_a , e {F_a}_0 = 0 \}$\\ 
 
 and 
\quad $\ker \Omega|_{\mathcal{M}_1} \supset \mathrm{span} \{\frac{\partial}{\partial a_0}\}$.

 Thus
$\mathcal{M'}_1 = \mathcal{M}_1/(\ker \Omega|\mathcal{M}_1) \cong \{(a,p,e)| \mathrm{d}_a^*p = 0, p=\epsilon e \wedge e_0, e_0{F_a}_0 = eF_a, e{F_a}_0 = 0 \}.$

%%%%%%%%%%%%%%%%%
%%%%%%%%%%%%%%%%%

\subsection{Gauge transformations: symmetry and reduction}

The group of gauge transformations $\mathcal{G}$, i.e, the group of automorphisms of the principal bundle $P$ over the identity, is a fundamental symmetry of the theory.      Notice that the Palatini action is invariant under the action of $\mathcal{G}.$

The quotient of the group of gauge transformations by the normal subgroup of identity gauge transformations at the boundary defines the group of gauge transformations at the boundary $\mathcal{G}_{\partial M}$, and it constitutes a symmetry group of the theory at the boundary, i.e. it is a symmetry group both of the boundary Lagrangian $\mathcal{L}$ and of the presymplectic system $(\mathcal{M}, \Omega, \mathcal{H})$.    We may take advantage of this symmetry to provide an alternative description of the constraints found in the previous section.    

\begin{proposition}
The map $\mathcal{J} \colon T^*\mathcal{F}_{\partial M} \to \mathfrak{g}_{\partial M}^*$ given by                                         $\mathcal{J}(a,p) = \-d_a^*p$ is the moment map of the action of the group $\mathcal{G}_{\partial M}$ on $ T^*\mathcal{F}_{\partial M}$ where the action of $\mathcal{G}_{\partial M}$ on $ T^*\mathcal{F}_{\partial M}$ is by cotangent liftings.
\end{proposition}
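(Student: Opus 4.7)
The plan is to apply the standard construction of the moment map for a cotangent-lifted Lie group action, adapted to the present infinite-dimensional setting of connections on $\partial M$.

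First, I would identify the fundamental vector field of the action of $\mathcal{G}_{\partial M}$ on $\mathcal{F}_{\partial M}$. For $\xi\in\mathfrak{g}_{\partial M}$ (a $\mathfrak{g}$-valued function on $\partial M$), differentiating the gauge action $a \mapsto g\,a\,g^{-1} - (\d g)\,g^{-1}$ at the identity yields the vector field $X_\xi(a) = -d_a\xi = -(\d\xi + [a,\xi])$ on $\mathcal{F}_{\partial M}$. Its cotangent lift $\widetilde{X}_\xi$ to $T^*\mathcal{F}_{\partial M}$ is a Hamiltonian vector field whose generating function, by the general theory of cotangent-lifted actions, is the pairing of the canonical 1-form with $\widetilde{X}_\xi$. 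Using the explicit form $\alpha = \langle p,\delta a\rangle$ given by equation (2.11), this yields
$$
\langle \mathcal{J}(a,p),\xi\rangle = \alpha_{(a,p)}(\widetilde{X}_\xi) = \langle p, X_\xi(a)\rangle = -\langle p, d_a\xi\rangle \, .
$$

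Next, I would unfold the right-hand side using the definition of $d_a$ and integration by parts. Writing $d_a\xi = \d\xi + [a,\xi]$, using the invariance of the Killing--Cartan form under the adjoint action, and applying Stokes's theorem on $\partial M$ (which is closed because $M$ is a smooth manifold with smooth boundary and no corners), one obtains
$$
\langle p, d_a\xi\rangle = \langle d_a^*p,\xi\rangle \, ,
$$
where $d_a^*$ denotes the covariant codifferential---the formal adjoint of $d_a$ with respect to the natural pairing. Hence $\langle\mathcal{J}(a,p),\xi\rangle = -\langle d_a^*p,\xi\rangle$ for every $\xi$, which identifies $\mathcal{J}(a,p) = -d_a^*p$. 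Equivariance of $\mathcal{J}$ under $\mathcal{G}_{\partial M}$ is automatic for cotangent-lifted actions, and the moment-map identity $i_{\widetilde{X}_\xi}\omega_{\partial M} = \d\langle\mathcal{J},\xi\rangle$ follows from $\omega_{\partial M} = -\d\alpha$ together with the fact that cotangent lifts preserve the canonical 1-form, via Cartan's formula.

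The main obstacle is bookkeeping the sign conventions and justifying the integration by parts in this infinite-dimensional, only locally trivialised setting. In particular, the minus sign in the statement depends crucially on the chosen convention for the gauge action on connections; a consistent choice is required so that the sign arising from the infinitesimal action matches the sign appearing in the adjoint relation between $d_a$ and $d_a^*$. A secondary technical point is verifying that no boundary contribution arises in the integration by parts, which uses the closedness of $\partial M$.
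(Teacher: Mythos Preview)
Your proposal is correct and follows essentially the same approach as the paper: compute the infinitesimal generator of the gauge action on $\mathcal{F}_{\partial M}$, pair it with $p$ via the canonical 1-form to obtain $\langle\mathcal{J}(a,p),\xi\rangle$, and move $d_a$ across the pairing to identify $\mathcal{J}(a,p)$ with $-d_a^*p$. The paper uses the opposite convention for the gauge action, namely $g\cdot a = g^{-1}ag + g^{-1}\d g$, so its fundamental vector field is $+d_a\xi$ and the minus sign appears instead in the adjoint step; you correctly anticipated this bookkeeping issue in your final paragraph, and either convention leads to the same conclusion.
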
 

\begin{proof}  The moment map $\mathcal{J} \colon T^*\mathcal{F}_{\partial M} \to \mathfrak{g}_{\partial M}^*$ is given by,
$$
\langle \mathcal{J} (a,p), \xi \rangle = \langle p, \xi_{\mathcal{F}_{\partial M}}\rangle = \langle
 p, \d_a\xi \rangle = \langle -d_a^*p, \xi \rangle \, ,
$$
because the gauge transformation $g_s = \exp s \xi$ acts in $a$ as $a \mapsto g_s\cdot a =   g_s^{-1} a g_s + g_s^{-1}\d g_s$
  and
the induced tangent vector is given by,
$$
\xi_{\mathcal{A}_{\partial M}} (a)= \frac{\d}{\d s}  g_s \cdot a \mid_{s = 0} = \d_a\xi \, .
$$

\end{proof}

By the standard Marsden-Weinstein reduction, $\mathcal{J}^{-1}(\mathbf{0})=\{(a,p)\in \mathcal{T^*F}_{\partial M}|  d_a^*p = 0\}$ is a coisotropic submanifold of the symplectic manifold $\mathcal{T^*F}_{\partial M}$ and $\mathcal{J}^{-1}(\mathbf{0})/\mathcal{G}_{\partial M}$ is symplectic.
$\{(a,p)\in \mathcal{T^*F}_{\partial M}| p = e\wedge e_0\}$ is easily seen to be a symplectic submanifold of $(\mathcal{T^*F}_{\partial M}), \Omega), \Omega = \delta a \wedge \delta p$. $ e_0{F_a}_0 = eF_a$ and $e{F_a}_0 = 0$ are coisotropic submanifolds of $\mathcal{T^*F}_{\partial M}$. This follows from the elementary observation that in a symplectic manifold a subspace defined by a function, $\phi = 0$ is a coisoptropic submanifold of the symplectic manifold.
 Now we need to check that the intersection of the coisotropic submanifolds comprising $\mathcal{M}_1'$ is a coisotropic submanifold. But this follows easily from the fact that the kernel  $\mathcal{J}^{-1}(\mathbf{0})=\{(a,p)\in \mathcal{T^*F}_{\partial M}|d_a^*p = 0\}$ is spanned by the action of the gauge group $\mathcal{G}_{\partial M}$ and from the observation that the action of $\mathcal{G}_{\partial M}$ leaves invariant the submanifolds $e_0{F_a}_0 = eF_a$ and $e{F_a}_0=0$. ker $\mathcal{J}^{-1}(\mathbf{0})$ is tangent to $\{(a,e)| e{F_a}_0=0\}$ and to $\{(a,e)| e_0{F_a}_0 = eF_a\}$ and is therefore contained in the tangent spaces of the two surfaces, and vice versa. Thus $\mathcal{M}_1'$ is a coisotropic and as described in section 3.2, the reduced space $\mathcal{R} = \mathcal{M}_1'/\mathcal{G}_{\partial M}$ is symplectic and $\widetilde{\Pi}(\mathcal{EL})$ is an isotropic submanifold of $\mathcal{R}$.

 \section{Conclusions and discussion}
 
  Using multisymplectic geometry we have described a Hamiltonian formulation of Palatini's General Relativity that is simple. Unlike ADM it does not involve lapse and shift operators and it does not require for it's application the assumption that spacetime is topologically ${R} \times S$ where $S$ is space. All we need to assume is that our spacetime manifold has a boundary and that the boundary has a collar.\\ 
  After the presymplectic constraint analysis, the analysis in the collar provides consistent solutions of the initial value problem for General Relativity. Unlike ADM, we use a formalism that is canonical, i.e. at every step the fundamental structures are preserved, both when discussing the constraints introduced from the bulk Palatini constraint $P= e \wedge e$ and when reducing the system by using gauge invariance.\\
   In a following work we will apply our techniques to study Ashtekar gravity and to the corresponding quantum aspects.

\section*{Acknowledgements}\label{sec:acknowledgements}

A.S. thanks Nicolai Reshetikhin for suggesting to her a problem that motivated this work.
A.I. was partially supported by the Community of Madrid project QUITEMAD+, S2013/ICE-2801, and by MINECO grant MTM2014-54692-P.


\begin{thebibliography}{999}

\bibitem[Ar62]{Ar62} R. Arnowitt, S. Deser, C.W. Misner, \textit{The dynamics of general relativity}, in \textit{Gravitation, an Introduction to Current Research}, L Witten, ed, (Wiley, New York 1962)

\bibitem[As87]{As87} A. Ashtekar, \textit{New Hamiltonian formulation of general relativity},  Phys. Rev. \textbf{D36} (1987) 1587--1602.


\bibitem[Ca91]{Ca91}  J.F. Cari\~nena, M. Crampin, A. Ibort.  \emph{On the multisymplectic formalism for first order field theories}. Diff. Geom., Appl., {\bf 1}, 345--374 (1991). 


\bibitem[Be58]{Be58} P. Bergmann, Phys. Rev. \textbf{112} (1958) 287; \textit{ibid.} Rev. Mod. Phys. \textbf{33} (1961). 

\bibitem[Be81]{Be81} P. Bergmann, A. Komar, \textit{The phase space formulation of general relativity and approaches towards quantization}. General Relativity and Gravitation, A Held ed, (1981) 227--254.

\bibitem[Go78]{Go78} M.J. Gotay, J.M. Nester, G. Hinds.  \textit{Presymplectic manifolds and the DiracÐBergmann theory of constraints}. Journal of Mathematical Physics, \textbf{19} (11), 2388--2399 (1978).

\bibitem[Go98]{Go98} M.J. Gotay, J. Isenberg, J.E. Marsden, R. Montgomery. \emph{Momentum maps and classical relativistic fields. Part I: Covariant field theory}. arXiv preprint physics/9801019 (1998); 



\bibitem[Ib15]{Ib15}  A. Ibort, A. Spivak.  \textit{Covariant Hamiltonian field theories on manifolds with boundary: Yang-Mills theories}. ArXiv *** (2015).



\bibitem[Re02]{Re02} M. Reisenberger and Carlo Rovelli. \textit{Spacetime states and covariant quantum theory}.  Physical Review D \textbf{65}.12 (2002) 125016.

\bibitem[Ro04]{Ro04} C. Rovelli. \textit{Dynamics without time for quantum gravity: covariant Hamiltonian formalism and Hamilton-Jacobi equation on the space $\mathcal{G}$}. Decoherence and Entropy in Complex Systems. Springer Berlin Heidelberg, 2004. 36--62.

\bibitem[Ro01]{Ro01}  C Rovelli, \textit{A note on the foundation of relativistic mechanics. Part I: Relativistic observables and relativistic states}, Companion paper, gr-qc/0111037.

\bibitem[Ro06]{Ro06} C. Rovelli. \textit{A note on the foundation of relativistic mechanics: covariant Hamiltonian general relativity} Topics in Mathematical Physics, General Relativity and Cosmology in Honor of Jerzy Pleba?ski: Proceedings of 2002 International Conference, Cinvestav, Mexico City, 17-20 September 2002. World Scientific (2006). 



\bibitem[Sc51]{Sc51}  J. Schwinger.  \emph{The Theory of Quantized Fields I}.  Phys. Rev., {\bf 82} (6) 914-927 (1951).

\end{thebibliography}
\end{document}